\newcommand{\lcllabel}{\Gamma}
\newcommand{\PP}{\Pi}
\newcommand{\N}{\mathbb{N}}
\newcommand{\A}{\mathcal{A}}
\newcommand{\C}{\mathcal{C}}
\newcommand{\CE}{\C_{\mathsf{edge}}}
\newcommand{\CV}{\C_{\mathsf{node}}}
\newcommand{\CH}{\C_{\mathsf{start}}}
\newcommand{\CT}{\C_{\mathsf{end}}}
\newcommand{\LL}{\mathcal{L}}
\newcommand{\M}{\mathcal{M}}
\newcommand{\GG}{\mathcal{G}}
\newcommand{\LCL}{\mathsf{LCL}}
\newcommand{\LOCAL}{\mathsf{LOCAL}}
\newcommand{\CONGEST}{\mathsf{CONGEST}}
\newcommand{\PRAM}{\mathsf{PRAM}}
\renewcommand{\P}{\mathsf{P}}
\newcommand{\PSPACE}{\mathsf{PSPACE}}
\newcommand{\EXPTIME}{\mathsf{EXPTIME}}
\newcommand{\NP}{\mathsf{NP}}
\newcommand{\coNP}{\mbox{co-$\NP$}}
\DeclareMathOperator{\poly}{poly}
\newcommand{\Ptwocol}{\mathsf{2col}}
\newcommand{\Pthreecol}{\mathsf{3col}}
\newcommand{\Porient}{\mathsf{orient}}
\newcommand{\unary}{o}
\newcommand{\tLCL}{\texorpdfstring{\boldmath$\LCL$}{LCL}}
\newtheorem{theorem}{Theorem}[section]
\newtheorem{lemma}[theorem]{Lemma}
\newtheorem{corollary}[theorem]{Corollary}
\theoremstyle{definition}
\newtheorem{definition}[theorem]{Definition}
\newtheorem{example}[theorem]{Example}
\theoremstyle{remark}
\newtheorem*{remark}{Remark}
\definecolor{myblue}{HTML}{0088cc}
\definecolor{myorange}{HTML}{f26924}
\newcommand{\myemail}[1]{\,$\cdot$\, {\small #1}}
\newcommand{\myaff}[1]{\,$\cdot$\, {\small #1}\par\smallskip}
\newenvironment{myabstract}
{\list{}{\listparindent 1.5em
        \itemindent    \listparindent
        \leftmargin    0cm
        \rightmargin   0cm
        \parsep        0pt}%
    \item\relax}
{\endlist}
\newenvironment{mycover}
{\list{}{\listparindent 0pt
        \itemindent    \listparindent
        \leftmargin    0cm
        \rightmargin   0cm
        \parsep        0pt}%
    \raggedright
    \item\relax}
{\endlist}
\begin{document}

\begin{mycover}
{\huge\bfseries\boldmath Distributed graph problems through an~automata-theoretic lens \par}
\bigskip
\bigskip

\textbf{Yi-Jun Chang}
\myemail{cyijun@nus.edu.sg}
\myaff{National University of Singapore}

\textbf{Jan Studen\'y}
\myemail{jan.studeny@aalto.fi}
\myaff{Aalto University}

\textbf{Jukka Suomela}
\myemail{jukka.suomela@aalto.fi}
\myaff{Aalto University}
\bigskip
\end{mycover}

\begin{myabstract}
\noindent\textbf{Abstract.}
The \emph{locality} of a graph problem is the smallest distance $T$ such that each node can choose its own part of the solution based on its radius-$T$ neighborhood. In many settings, a graph problem can be solved efficiently with a distributed or parallel algorithm if and only if it has a small locality.

In this work we seek to \emph{automate} the study of solvability and locality: given the description of a graph problem $\Pi$, we would like to determine if $\Pi$ is solvable and what is the asymptotic locality of $\Pi$ as a function of the size of the graph. Put otherwise, we seek to automatically \emph{synthesize} efficient distributed and parallel algorithms for solving $\Pi$.

We focus on \emph{locally checkable} graph problems; these are problems in which a solution is globally feasible if it looks feasible in all constant-radius neighborhoods. Prior work on such problems has brought primarily bad news: questions related to locality are undecidable in general, and even if we focus on the case of \emph{labeled} paths and cycles, determining locality is $\PSPACE$-hard (Balliu et al., PODC 2019).

We complement prior negative results with efficient algorithms for the cases of \emph{unlabeled} paths and cycles and, as an extension, for rooted trees. We study locally checkable graph problems from an automata-theoretic perspective by representing a locally checkable problem $\Pi$ as a \emph{nondeterministic finite automaton} $\M$ over a \emph{unary alphabet}. We identify polynomial-time-computable properties of the automaton $\M$ that near-completely capture the solvability and locality of $\Pi$ in cycles and paths, with the exception of one specific case that is $\coNP$-complete.
\end{myabstract}
\medskip

\section{Introduction}\label{sec:intro}

In this work, our goal is to \emph{automate} the design of efficient distributed and parallel algorithms for solving graph problems, as far as possible. In the full generality, such tasks are undecidable: for example, given a Turing machine $M$, we can easily construct a graph problem $\Pi$ such that there is an efficient distributed algorithm for solving $\Pi$ if and only if $M$ halts~\cite{Naor1995}. Nevertheless, we are bringing here good news.

We focus on so-called \emph{locally checkable} graph problems in \emph{paths, cycles, and rooted trees}, and we show that in many cases, the task of designing efficient distributed or parallel algorithms for such problems can be automated, not only in principle but also in practice. 

We study the \emph{locality} of graph problems from an \emph{automata-theoretic perspective}. To introduce the concrete research questions that we study, we first define one specific model of distributed computing, the $\LOCAL$ model---through this model we can define the fundamental concept of locality. However, as we will later see, our results are directly applicable in many other synchronous models of distributed and parallel computing as well.

\paragraph{Background: locality and round complexity in distributed computing.}

In classical centralized sequential computing, a particularly successful idea has been the comparison of deterministic and nondeterministic models of computing. The question of $\P$ vs.\ $\NP$ is a prime example: given a problem in which solutions are easy to verify, is it also easy to solve?

In distributed computing a key computational resource is locality, and hence the distributed analogue of this idea can be phrased as follows: given a problem in which solutions can be verified locally, can it also be solved locally?

This question is formalized in the study of so-called \emph{locally checkable labeling} ($\LCL$) problems in the $\LOCAL$ model of distributed computing. $\LCL$ problems are graph problems in which solutions are \emph{labelings} of nodes and/or edges that can be \emph{verified locally}: if a solution looks feasible in all constant-radius neighborhoods, then it is also globally feasible \cite{Naor1995}. A simple example of an $\LCL$ problem is proper $3$-coloring of a graph: if a labeling of the nodes looks like a proper $3$-coloring in the radius-$1$ neighborhood of each node, then it is by definition a feasible solution.

In the $\LOCAL$ model of computing~\cite{Linial1992,Peleg2000}, we assume that the nodes of the input graph are equipped with unique identifiers from $\{1,2,\dotsc,\poly(n)\}$, where $n$ is the number of nodes. A distributed algorithm with a time complexity $T(n)$ is then a function that maps the radius-$T(n)$ neighborhood of each node into its local output. The local output of a node is its own part of the solution, e.g., its own color in the graph coloring problem. Here we say that the algorithm has locality $T$; the locality of a problem is the smallest $T$ such that there is an algorithm for solving it with locality $T$.

If we interpret the input graph as a computer network, with nodes as computers and edges as communication links, then in $T$ synchronous communication rounds all nodes can gather full information about their radius-$T$ neighborhood. Hence time (number of communication rounds) and distance (how far one needs to see) are interchangeable in the $\LOCAL$ model. In what follows, we will primarily use the term \emph{round complexity}.

\paragraph{Prior work: the complexity landscape of \tLCL{} problems.}

Now we have a natural distributed analog of the classical $\P$ vs.\ $\NP$ question: given an $\LCL$ problem, what is its round complexity in the $\LOCAL$ model? This is a question that was already introduced by Naor and Stockmeyer in 1995~\cite{Naor1995}, but the systematic study of the complexity landscape of $\LCL$ questions was started only very recently, around 2016~\cite{Brandt2016,Balliu2018stoc,Balliu2018disc,Balliu2019padding,Chang2019,Ghaffari2018,fischer17sublogarithmic,chang16exponential,ghaffari17distributed,Rozhon2020,Brandt2017}.

By now we have got a relatively complete understanding of \emph{possible complexity classes}: to give a simple example, if we look at deterministic algorithms in the $\LOCAL$ model, there are $\LCL$ problems with complexity $\Theta(\log^* n)$, and there are also $\LCL$ problems with complexity $\Theta(\log n)$, but it can be shown that there is no $\LCL$ problem with complexity between $\omega(\log^* n)$ and $o(\log n)$~\cite{Brandt2016,chang16exponential}.

However, much less is known about \emph{how to decide the complexity} of a given $\LCL$ problem. Many such questions are undecidable in general, and undecidability holds already in relatively simple settings such as $\LCL$s on 2-dimensional grids and tori~\cite{Brandt2017,Naor1995}. We will zoom into graph classes in which no such obstacle exists.

\paragraph{Our focus: cycles, paths, and rooted trees.}

Throughout this work, our main focus will be on paths and cycles. This may at first seem highly restrictive, but as we will show in Section~\ref{sec:trees}, once we understand $\LCL$ problems in paths and cycles, through reductions we will also gain understanding on so-called edge-checkable problems in rooted trees.

In cycles and paths, there are only three possible round complexities: $O(1)$, $\Theta(\log^* n)$, or $\Theta(n)$~\cite{Balliu2019decidable}. Randomness does not help in cycles and paths---this is a major difference in comparison with trees, in which there are $\LCL$ problems in which randomness helps exponentially~\cite{BEPS16,chang16exponential,PettieS-trianglefree}.

If our input is a \emph{labeled} path or cycle, the round complexity is known to be decidable, but unfortunately it is at least $\PSPACE$-hard~\cite{Balliu2019decidable}. On the other hand, the round complexity of $\LCL$s on \emph{unlabeled directed cycles} has a simple graph-theoretic characterization~\cite{Brandt2017}.

However, many questions are left open by prior work, and these are the questions that we will resolve in this work:
\begin{itemize}[noitemsep]
    \item What happens in \emph{undirected} cycles?
    \item What happens if we study \emph{paths} instead of cycles?
    \item Can we also characterize the \emph{existence} of a solution for all graphs in a graph class?
\end{itemize}
To illustrate these questions, consider the following problems that can be expressed as $\LCL$s:
\begin{itemize}[noitemsep]
    \item $\Pi_\Ptwocol$: finding a proper $2$-coloring,
    \item $\Pi_\Porient$: finding a globally consistent orientation (i.e., an orientation of edges such that it does not contain a node with two incoming or outgoing edges).
\end{itemize}
The round complexity of $\Pi_\Ptwocol$ is $\Theta(n)$ both in cycles and paths, regardless of whether they are directed or undirected, while the complexity of $\Pi_\Porient$ is $\Theta(n)$ in the undirected setting but it becomes $O(1)$ in the directed setting. Problems $\Pi_\Ptwocol$ and $\Pi_\Porient$ are always solvable on paths, and $\Pi_\Porient$ is always solvable on cycles, but if we have an odd cycle, then a solution to $\Pi_\Ptwocol$ does not exist. In particular, for $\Pi_\Ptwocol$ there are infinitely many solvable instances and infinitely many unsolvable instances. Our goal in this work is to develop a framework that enables us to make these kind of observations \emph{automatically} for any given $\LCL$ problem.

\paragraph{\tLCL{}s as nondeterministic automata over a unary alphabet.}

In this work we study the   solvability and the round complexity of $\LCL$ problems from an automata-theoretic perspective.
Specifically, we generalize the graph-theoretic characterization for $\LCL$ problems on 
 unlabeled directed cycles in~\cite{Brandt2017} to all paths and cycles, directed and undirected, and  identify a connection between such a characterization and automata theory. 
 
 This connection allows us to leverage prior work on automata theory.  For example, as we will later see in this work, the co-$\NP$-completeness of the universality problem for  nondeterministic finite automata~\cite{Stockmeyer73} allows us to deduce the $\NP$-hardness for distinguishing between zero and infinitely many unsolvable instances for $\LCL$ problems on paths.

We would like to emphasize that there are many ways to interpret $\LCL$s as automata---and the approach that might seem most natural does not make it possible to directly leverage prior work on automata theory. 
We will later see that the approach we take enables us to identify direct connections between distributed computational complexity and automata theory.

Let us first briefly describe the ``obvious'' encoding and show why it does not achieve what we want: A labeling of a directed path with symbols from some alphabet $\Sigma$ can be interpreted as a string. Then a locally checkable problem can be interpreted as a regular language over alphabet $\Sigma$. We can then represent an $\LCL$ problem $\Pi$ as a finite automaton $\M$ such that $\M$ accepts a string $x \in \Sigma^\ast$ if and only if a directed path labeled with $x$ is a feasible solution to~$\Pi$.

However, such an interpretation does \emph{not} seem to lead to a useful theory of $\LCL$ problems. To see one challenge, consider these problems on paths:
\begin{itemize}[noitemsep]
    \item $\Pi_\Ptwocol$: finding a proper $2$-coloring,
    \item $\Pi_\Pthreecol$: finding a proper $3$-coloring.
\end{itemize}
These are fundamentally different problems from the perspective of $\LCL$s in the $\LOCAL$ model: problem $\Pi_\Ptwocol$ requires $\Theta(n)$ rounds while problem $\Pi_\Pthreecol$ is solvable in $\Theta(\log^* n)$ rounds~\cite{cole86deterministic}. However, if we consider analogous automata $\M_\Ptwocol$ and $\M_\Pthreecol$ that recognize these solutions, it is not easy to identify a classical automata-theoretic concept that would separate these cases.

Instead of identifying the \emph{alphabet} of the automaton with the set of labels in the $\LCL$, it turns out to be a better idea to have a \emph{unary} alphabet and identify the \emph{set of states} of the automaton with the set of labels. In brief, the perspective that we take throughout this work is as follows (this is a simplified version of the idea):
\begin{framed}
\noindent
Assume $\Pi$ is an $\LCL$ problem in which the set of output labels is $\Gamma$. We interpret $\Pi$ as a \emph{nondeterministic finite automaton} $\M_\Pi$ over the \emph{unary alphabet} $\Sigma = \{ \unary \}$ such that the set of states of $\M_\Pi$ is $\Gamma$.
\end{framed}
\noindent At first this approach may seem counter-intuitive. But as we will see in this work, it enables us to connect classical automata-theoretic concepts to properties of $\LCL$s this way.

To give one nontrivial example, consider the question of whether a given $\LCL$ problem $\Pi$ can be solved in $O(\log^* n)$ rounds. With the above interpretation, this turns out to be directly connected to the existence of \emph{synchronizing words}~\cite{Cerny1964,eppstein1990reset}, in the following nondeterministic sense: we say that $w$ is a synchronizing word for an NFA $\M$ that takes $\M$ into state $t$ if, given any starting state $s \in Q$ there is a sequence of state transitions that takes $\M$ to state $t$ when it processes~$w$. Such a sequence $w$ is known as the \emph{D3-directing word} introduced in~\cite{Imreh1999} and further studied in~\cite{don2018synchronizing,imreh2005regular,GAZDAG2009986,Martyugin2014}. We will show that the following holds (up to some minor technicalities):
\begin{framed}
\noindent An $\LCL$ on directed paths and cycles has a round complexity of $O(\log^* n)$ if and only if a strongly connected component of the corresponding NFA $\M$ over the unary alphabet has a D3-directing word.
\end{framed}
\noindent Moreover, we will show that for the unary alphabet, the existence of such a word can be decided in \emph{polynomial time} in the size of the NFA $\M$, or equivalently, in the size of the description of the $\LCL$ $\Pi$. In contrast, when the size of the alphabet is at least two, the problem of deciding the existence of a D3-directing word is known to be $\PSPACE$-hard~\cite{Martyugin2014}.

We would like to emphasize that this connection between $\LCL$ problems and automata theory is not inherent to unlabeled paths and cycles.  For example, \emph{tree automata} can be used to encode   $\LCL$ problems on bounded-degree trees, and to encode $\LCL$ problems with input labels $\Sigma$, it suffices to consider  automata over the alphabet $\Sigma$. Whether such a  connection  beyond unlabeled paths and cycles can lead to new results is an interesting future work direction.

\paragraph{Contributions.}

We study $\LCL$ problems in unlabeled cycles and paths, both with and without consistent orientation. For each of these settings, we show how to answer the following questions in a mechanical manner, for any given $\LCL$ problem $\Pi$:
\begin{itemize}[noitemsep]
    \item How many unsolvable instances there are (none, finitely, or infinitely many)?
    \item How many solvable instances there are (none, finitely, or infinitely many)?
    \item What is the round complexity of $\Pi$ for solvable instances ($O(1)$, $\Theta(\log^* n)$, or $\Theta(n)$)?
\end{itemize}
We show that all such questions are not only decidable but they are in $\NP$ or $\coNP$, and almost all such questions are in $\P$, with the exception of a couple of specific questions that are $\NP$-complete or $\coNP$-complete. We also give a complete classification of all possible case combinations---for example, we show that if there are infinitely many unsolvable instances, then the complexity of the problem for solvable instances cannot be $\Theta(\log^* n)$.

We give a uniform automata-theoretic formalism that enables us to study such questions, and that makes it possible to leverage prior work on automata theory. We also develop new efficient algorithms for some automata-theoretic questions that to our knowledge have not been studied before.

Finally, we show that our results can be used to analyze also a family of $\LCL$ problems in rooted trees. This demonstrates that the automata-theoretic framework considered here is applicable also beyond the seemingly restrictive case of cycles and paths.

Our main result---the complete classification of the solvability and distributed round complexity of all $\LCL$ problems in undirected and directed cycles and paths is presented in Table~\ref{tab:cycles}.

\begin{table}[t!]
\newcommand{\pcol}[1]{\textcolor{myblue}{#1}}
\newcommand{\ncol}[1]{\textcolor{myorange}{#1}}
\newcommand{\yes}{\pcol{yes}}
\newcommand{\no}{\ncol{no}}
\newcommand{\ifapplicable}{$^\dagger$}

\newcommand{\const}{$\square$}
\newcommand{\logstar}{$\boxplus$}
\newcommand{\linear}{$\blacksquare$}

\newcommand{\xzro}{$0$}
\newcommand{\xcon}{$<$}
\newcommand{\xfty}{$\infty$}
\newcommand{\xhard}{?}
\newcommand{\goodzro}{{\xzro}}
\newcommand{\goodcon}{{{\xcon}}}
\newcommand{\goodfty}{{\xfty}}
\newcommand{\badzro}{{\xzro}}
\newcommand{\badcon}{{{\xcon}}}
\newcommand{\badfty}{{\xfty}}
\newcommand{\badhard}{{\xhard}}
\newcommand{\dontcare}{---}
\newcommand{\na}{$\times$}
\newcommand{\colsp}{\hspace{4.5mm}}
\newcommand{\lfsep}{\hspace{3.5mm}}
\newcommand{\lsep}{\hspace{3.5mm}}
\centering
\begin{tabular}{@{}l@{\colsp\colsp}c@{\colsp}c@{\colsp}c@{\colsp}c@{\colsp}c@{\colsp}c@{\colsp}c@{\colsp}c@{\colsp}c@{\colsp}c@{\colsp}c@{}}
\toprule
Type
& \textbf{A} & \textbf{B} & \textbf{C} & \textbf{D} & \textbf{E} & \textbf{F} & \textbf{G} & \textbf{H} & \textbf{I} & \textbf{J} & \textbf{K} \\
\midrule
Def.~\ref{def:symmetric}: symmetric problem& \yes   & \yes   & \yes&\no  & \yes   & \yes&\no  & \yes&\no  & \yes&\no \\
Def.~\ref{def:repeatable-state}: repeatable state            & \yes   & \yes   & \yes & \yes    & \yes   & \yes & \yes     & \yes & \yes    & \no & \no    \\
Def.~\ref{def:flexible-state}: flexible state \cite{Brandt2017}             & \yes   & \yes   & \yes & \yes     & \yes   & \yes & \yes     & \no & \no     & \no & \no    \\
Def.~\ref{def:loop}: loop \cite{Brandt2017}                       & \yes   & \yes   & \yes & \yes     & \no    & \no & \no      & \no & \no     & \no & \no    \\
Def.~\ref{def:mirror-flexible-state}: mirror-flexible state       & \yes   & \yes   & \no & \dontcare     & \yes   & \no & \dontcare     & \no & \dontcare     & \no & \dontcare    \\
Def.~\ref{def:mirror-flexible-loop}: mirror-flexible loop        & \yes   & \no    & \no & \dontcare     & \no    & \no & \dontcare     & \no & \dontcare     & \no & \dontcare    \\
\midrule
\addlinespace[6pt]
\multicolumn{12}{@{}l@{}}{\textbf{Number of instances:}\lfsep\xzro{} = zero\lsep\xcon{} = finite\lsep\xfty{} = infinite\lsep\xhard{} = $\NP$-complete to decide}\\
\addlinespace[5pt]
$\cdot$ solvable cycles     &\goodfty&\goodfty&\goodfty &\goodfty &\goodfty&\goodfty &\goodfty &\goodfty &\goodfty &\badzro &\badzro \\
$\cdot$ solvable paths      &\goodfty&\goodfty&\goodfty &\goodfty &\goodfty&\goodfty &\goodfty &\goodfty &\goodfty &\badcon &\badcon \\
$\cdot$ unsolvable cycles   &\goodzro&\goodzro&\goodzro &\goodzro &\goodcon&\goodcon &\goodcon &\badfty  &\badfty  &\badfty &\badfty \\
$\cdot$ unsolvable paths    &\goodcon&\goodcon&\goodcon &\goodcon &\goodcon&\goodcon &\goodcon &\badhard &\badhard &\badfty &\badfty \\
\midrule
\addlinespace[6pt]
\multicolumn{12}{@{}l@{}}{\textbf{Distributed round complexity:}\lfsep\const{} = $O(1)$\lsep\logstar{} = $\Theta(\log^* n)$\lsep\linear{} = $\Theta(n)$\lsep\na{} = N/A}\\
\addlinespace[5pt]
$\cdot$ directed cycles \cite{Brandt2017}    &\const  &\const  &\const &\const   &\logstar&\logstar &\logstar &\linear  &\linear  &\na      &\na     \\
$\cdot$ directed paths                       &\const  &\const  &\const &\const   &\logstar&\logstar &\logstar &\linear  &\linear  &\const   &\const  \\
$\cdot$ undirected cycles                    &\const  &\logstar&\linear&\na      &\logstar&\linear  &\na      &\linear  &\na      &\na      &\na     \\
$\cdot$ undirected paths                     &\const  &\logstar&\linear&\na      &\logstar&\linear  &\na      &\linear  &\na      &\const   &\na     \\
\bottomrule
\end{tabular}
\caption{Classification of $\LCL$ problems in cycles and paths. This table defines 11 types, labeled with A--K, based on six properties (Definitions \ref{def:symmetric}, \ref{def:repeatable-state}--\ref{def:mirror-flexible-loop}); see Figure~\ref{fig:cycles} for examples of problems of each type. For each problem type, we show what is the number of solvable instances, the number of unsolvable instances, and the distributed round complexity for both directed and undirected paths and cycles. The cases marked with ``\na'' refer to problems that are not well-defined or that are never solvable. For the cases labeled with ``\badhard'' deciding the number of unsolvable instances is $\NP$-complete (or $\coNP$-complete depending on the way one defines the decision problem); see Section~\ref{sec:paths}. However, for all other cases the type directly determines both solvability, and all these cases are also decidable in polynomial time; see Section~\ref{sec:efficient}. The correctness of this classification is proved in Section~\ref{sec:cycles-correct}.}
\label{tab:cycles}
\end{table}

\paragraph{Extensions to other models of distributed and parallel computing.}

While we use the $\LOCAL$ model of distributed computing throughout this work, our results are directly applicable also in many other models of distributed and parallel computing.

In distributed computing we usually assume that the input graph represents the communication network; each node is a computer, each edge is a communication link, and the nodes can communicate by passing messages to each other. However, in parallel computing we usually take a very different perspective: we assume that the input graph is stored as a linked data structure somewhere in the shared memory, and we have multiple processors that can access the memory. In such a setting, directed paths and rooted trees are particularly relevant families of input, as they correspond to linked lists and tree data structures.

While the settings are superficially different, our \emph{upper bounds} apply directly in all such settings. All of our distributed algorithms are based on the observation that there are two canonical problems: \emph{distance-$k$ anchoring} (Definition~\ref{def:anchoring}) and \emph{distance-$k$ orientation} (Definition~\ref{def:orientation}). Both of the canonical problems can be solved in the message-passing setting with small messages and with little local memory. Furthermore, when we look at rooted trees (Section~\ref{sec:trees}), our algorithms are ``one-sided'': each node only needs to receive information from its parent. It follows that our algorithms work also e.g.\ in the $\CONGEST$ model \cite{Peleg2000} of distributed computing, and they can be efficiently simulated e.g.\ in the classic $\PRAM$ model, as well as various modern models of massively parallel computing.

Our \emph{lower bounds} are also broadly applicable, as they hold in the $\LOCAL$ model, which is a very strong model of distributed computing (unbounded message size; unlimited local storage; unbounded local computation; nodes can talk to all of their neighbors in parallel). In particular, the lower bounds trivially hold also the $\CONGEST$ model. Adapting the lower bounds to shared-memory models takes more effort, but it is also possible---see Fich and Ramachandran \cite{Fich1990} for an example of how to turn $\Omega(\log^* n)$ lower bounds for the $\LOCAL$ model into $\Omega(\log \log^* n)$ lower bounds for variants of the $\PRAM$ model.

\paragraph{Comparison with prior work.}

In comparison with \cite{Balliu2019decidable,Brandt2017,chang16exponential,Chang2019,Naor1995}, our work gives a more fine-grained perspective: instead of merely discussing decidability, we explore the question of which of the decision problems are in $\P$, $\NP$, and $\coNP$.

In comparison with the discussion of directed cycles in \cite{Brandt2017}, our work studies a much broader range of settings. Previously, it was not expected that the simple characterization of $\LCL$s on directed cycles could be extended in a straightforward manner to paths or undirected cycles. For example, we can define an infinite family of orientation problems that can be solved in undirected cycles in $O(1)$ rounds but that require a nontrivial algorithm; such problems do not exist in directed cycles, as $O(1)$-round solvability implies trivial $0$-round solvability.

Furthermore, we study the graph-theoretic question of the existence of a solution in addition to the algorithmic question of the complexity of finding a solution, and relate solvability with complexity in a systematic manner; we are not aware of prior work that would do the same in the context of $\LCL$s in the $\LOCAL$ model.

Our work also takes the first steps towards an effective (i.e., polynomial-time computable)
characterization of $\LCL$ problems in trees, by showing how to characterize so-called edge-checkable problems in rooted trees. 

For general $\LCL$ problems on bounded-degree trees, previous work~\cite{Balliu2018disc,Chang2019,chang:LIPIcs:2020:13096} showed that it is decidable to distinguish between the complexity pairs $O(\log n)$ -- $n^{\Omega(1)}$  and $O(n^{1/(k+1)})$ -- $\Omega(n^{1/k})$ for any constant $k \geq 1$. These algorithms are not efficient, as these are $\EXPTIME$-hard problems~\cite{Chang2019}.

The previous work~\cite{Balliu2019decidable,Balliu2018disc,Chang2019,chang:LIPIcs:2020:13096} studying the complexity landscape of $\LCL$ problems on paths, cycles, and bounded-degree trees \emph{with input labels} uses a different connection to  automata theory. In their proofs, they classified paths and trees into a finite number of classes satisfying certain properties using the \emph{pumping lemma} for regular languages.

\paragraph{Organization.}
In Section~\ref{sec:prelim}, we formally define $\LCL$ problems and their representation as automata. In Sections~\ref{sec:cycles} and~\ref{sec:paths}, we present our classification of $\LCL$ problems on cycles and paths. In Section~\ref{sec:efficient}, we show that the classification is polynomial-time computable.
In Section~\ref{sec:cycles-correct}, we prove the correctness of our classification.
In Section~\ref{sec:trees}, we extend our classification to rooted trees.
In Section~\ref{sec:conclusion}, we give some concluding remark and point to some open problems.

\section{Representation of \tLCL{}s as automata}\label{sec:prelim}

To reiterate, $\LCL$ problems \cite{Naor1995}, broadly speaking, are problems in which the task is to label nodes and/or edges with labels from a constant-size alphabet (denoted by $\Gamma$), subject to local constraints. That is, a solution is globally feasible if it looks good in all radius-$r$ neighborhoods for some constant~$r$. In this section we will develop a way to represent all $\LCL$ problems on paths and cycles as  nondeterministic automata.

In this paper, we consider as input graphs only paths and cycles in which the edges are either undirected (undirected case) or consistently oriented (directed case). We say that a cycle or a path has \emph{consistently oriented edges} if it does not contain a node with two incoming or two outgoing edges.

\subsection{Formalizing \tLCL{}s as node-edge-checkable problems}

$\LCL$ problems can be specified in many different forms, and we have to be able to capture, among others, problems of the following forms:
\begin{itemize}[noitemsep]
    \item The problem may ask for a labeling of nodes, a labeling of edges, a labeling of the endpoints of the edges, an orientation of the edges, or any combination of these.
    \item The input graph can be a path or a cycle.
    \item The input graph may be directed or undirected.
\end{itemize}
As discussed in the recent papers \cite{Balliu2019a,Balliu2019}, a rather elegant way to capture all $\LCL$ problems is the following approach: We imagine that we have split every edge into two \emph{half-edges}, which are also called \emph{ports}. The labeling refers only to the ports.

More formally, a \emph{port} or a \emph{half-edge} $p$ is a pair $(e,v)$ consisting of an edge $e$ and a node $v \in e$ incident to~$e$. Let $P$ be the set of all ports. A \emph{labeling} is a function $\lambda\colon P \to \Gamma$ from ports to labels from some alphabet $\Gamma$.

It is easy to see that we can represent $\LCL$ problems of different flavors in this formalism, for example:
\begin{itemize}[noitemsep]
    \item If the task is to label nodes, we require all ports incident to a node to be labeled by the same label, so that the label of a node is well-defined.
    \item If the task is to label edges, we require that both half-edges of each edge have the same label, so that the label of an edge is well-defined.
    \item If the task is to find an orientation, we can use e.g.\ symbols $H$ (head) and $T$ (tail) and require that for each edge exactly one half is labeled with $H$ and the other half is labeled with $T$, so that the orientation of each edge is well-defined.
\end{itemize}

Moreover, the constraints for node-edge-checkable problems will be divided into \emph{node constraints} and \emph{edge constraints}. Node constraints consider only incident port labels of a node and edge constraints consider only incident port labels of an edge. 

We will now formally define an $\LCL$ problem in the node-edge-checkable formalism. Let us first consider the case of directed cycles or paths. By assumption, a directed cycle or a directed path is consistently oriented. For each edge, one port is a \emph{tail port} and the other port is a \emph{head port}. Furthermore, for each degree-$2$ node, there is also exactly one head port and exactly one tail port incident to it.
\begin{definition}[LCL problem]
An $\LCL$ problem $\Pi$ in the node-edge-checkable formalism on cycles or paths is a tuple $\Pi = (\Gamma,\CE,\CV,\CH,\CT)$ consisting of
\begin{itemize}[noitemsep]
    \item a finite set $\Gamma$ of output labels,
    \item an edge constraint $\CE \subseteq \Gamma \times \Gamma$,
    \item a node constraint $\CV \subseteq \Gamma \times \Gamma$, and
    \item start and end constraints $\CH \subseteq \Gamma$ and $\CT \subseteq \Gamma$.
\end{itemize}
\end{definition}

\begin{definition}[Solution on directed cycles or paths]
Let $G$ be a directed cycle or a directed path, and let $\Pi$ be an $\LCL$ problem, and let $\lambda\colon P \to \Gamma$ be a labeling of $G$. We say that $\lambda$ is a \emph{solution} to $\Pi$ if the following holds:
\begin{itemize}[noitemsep]
    \item For each edge $e$, if $p$ is the tail port and $q$ is the head port of $e$, then $(\lambda(p), \lambda(q)) \in \CE$.
    \item For each degree-$2$ node $v$, if $p$ is the head port and $q$ is the tail port of $v$, then $(\lambda(p), \lambda(q)) \in \CV$.
    \item For each degree-$1$ node $v$ with only one tail port $p$, we have $\lambda(p) \in \CH$.
    \item For each degree-$1$ node $v$ with only one head port $p$, we have $\lambda(p) \in \CT$.
\end{itemize}
\end{definition}

Informally, when we follow the labeling in the positive direction along the directed path, we will first see a label from $\CH$, then each edge is labeled with a pair from $\CE$, each internal node is labeled with a pair from $\CV$, and the final label along the path is $\CT$.

Next we consider the case of undirected cycles or paths. 

\begin{definition}[Symmetric LCL problems]\label{def:symmetric}
We say that an $\LCL$ problem $\Pi = (\Gamma,\CE,\allowbreak\CV,\allowbreak\CH,\CT)$ is \emph{symmetric} if $\CE$ and $\CV$ are symmetric relations and $\CH = \CT$. Otherwise the problem is \emph{asymmetric}.
\end{definition}

In the undirected case we cannot consistently distinguish ports, and hence we can only solve and define solution for symmetric LCL problems.

\begin{definition}[Solution on undirected cycles or paths]
Let $G$ be an undirected cycle or an undirected path, and let $\Pi$ be a symmetric $\LCL$ problem, and let $\lambda\colon P \to \Gamma$ be a labeling of $G$. We say that $\lambda$ is a \emph{solution} to $\Pi$ if the following holds:
\begin{itemize}[noitemsep]
    \item For each edge $e$, if the ports of $e$ are $p$ and $q$, then $(\lambda(p), \lambda(q)) \in \CE$. 
    \item For each degree-$2$ node $v$, if the ports incident to $v$ are $p$ and $q$, then $(\lambda(p), \lambda(q)) \in \CV$.
    \item For each degree-$1$ node $v$, if the port incident to $v$ is $p$, then $\lambda(p) \in \CH = \CT$.
\end{itemize}
\end{definition}

Recall that in symmetric problems $\CE$ and $\CV$ are symmetric, so the above formulation is well-defined. When we study the case of cycles, we can set $\CH = \CT = \emptyset$. For brevity, in what follows, we will usually write the pair $(a,b)$ simply as $ab$.

It is usually fairly easy to encode any given $\LCL$ problem in a natural manner in this formalism---see Figure~\ref{fig:ex-cycle} for examples. In the figure, \emph{maximal matching} serves as an example of a problem in which the natural encoding of indicating which edges are part of the matching does not work (it does not capture maximality) but with one additional label we can precisely define a problem that is equivalent to maximal matchings.

\begin{figure}[t!]
\centering
\includegraphics[page=1]{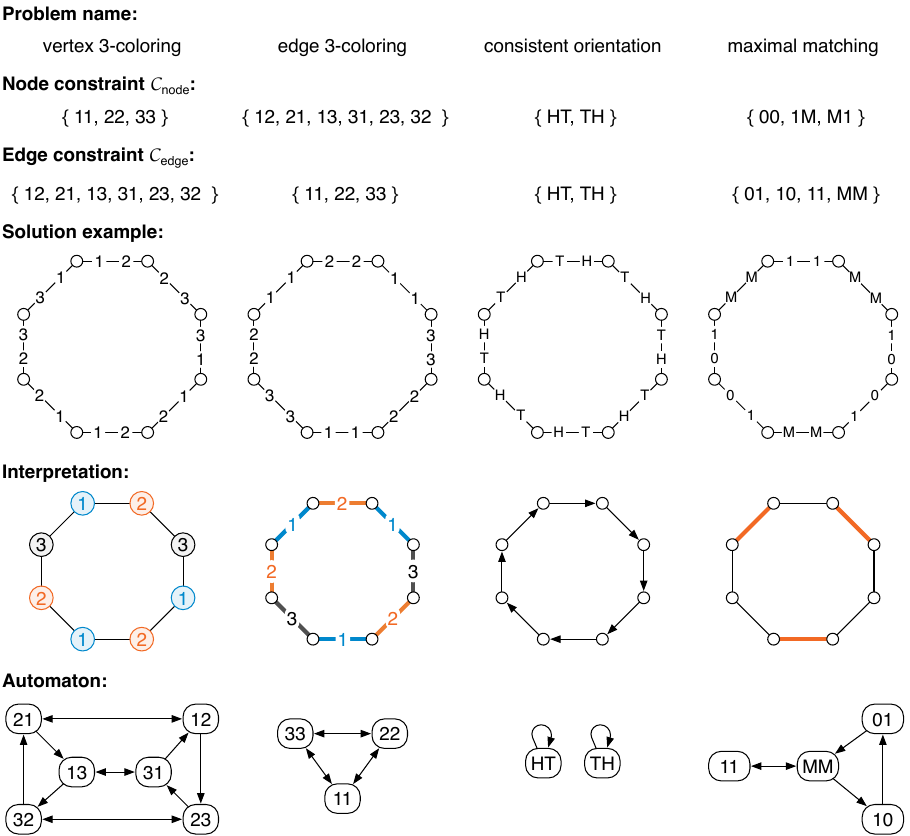}
\caption{Examples of how to encode $\LCL$ problems in the node-edge-checkable formalism, and how to represent the problem as an automaton. Here the problems are symmetric, so they are well-specified also on undirected cycles. For maximal matching, ports incident to matched nodes are labeled with ``1'' and ``M'', ports incident to unmatched nodes are labeled with ``0'', and the edge constraints ensure that there are no unmatched nodes adjacent to each other.}\label{fig:ex-cycle}
\end{figure}

In general, if we have any $\LCL$ problem $\Pi$ (in which the problem description can refer to radius-$r$ neighborhoods for some constant $r$), we can define an equivalent problem $\Pi'$ that can be represented in the node-edge-checkable formalism, modulo constant-time preprocessing and postprocessing. In brief, one label in the new problem $\Pi'$ corresponds to the labeling of a subpath of length $\Theta(r)$ in $\Pi$. Now given a solution of $\Pi$, one can construct a solution of $\Pi'$ in $O(r)$ rounds, and given a solution of $\Pi'$, one can construct a solution of $\Pi$ in zero rounds. Moreover, $\Pi'$ can be specified in the node-edge-checkable formalism. We will give the details in Section~\ref{sec:universality}. From now on, all $\LCL$ problems considered are by default problems defined using the node-edge-checkable formalism.

\begin{figure}
\centering
\includegraphics[page=2]{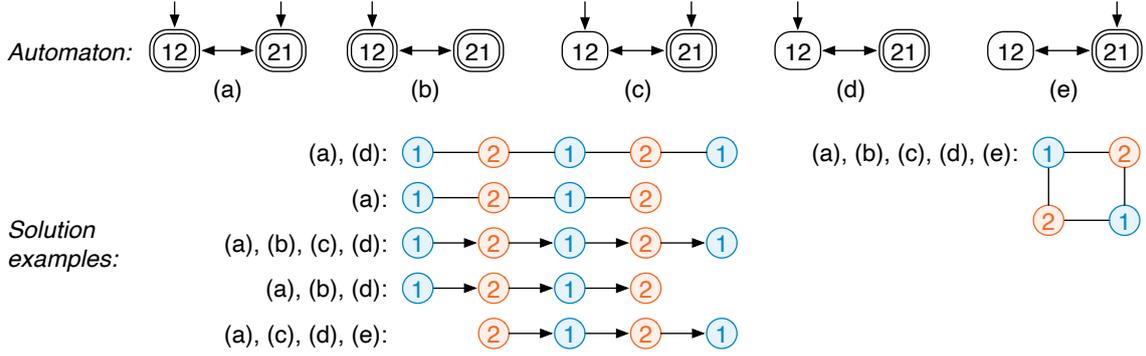}
\caption{Five variants of the node $2$-coloring problem. Each variant has different allowed colors for the endpoints, hence also different starting and accepting states. Here (a) and (d) are the only problems that are symmetric; therefore problems (b), (c), and (e) are not meaningful on undirected paths.}\label{fig:ex-path}
\end{figure}

\subsection{Turning node-edge-checkable problems into automata}

Now consider an $\LCL$ problem $\Pi$ that is specified in the node-edge-checkable formalism. Construct a nondeterministic finite automaton $\M_\Pi$ as follows; see Figures \ref{fig:ex-cycle} and \ref{fig:ex-path} for examples.
\begin{framed}
\begin{itemize}[nolistsep,leftmargin=*]
    \item The set of states is $\CE$.
    \item There is a transition from $(a,b)$ to $(c,d)$ whenever $(b,c) \in \CV$.
    \item $(a,b) \in \CE$ is a starting state whenever $a \in \CH$.
    \item $(a,b) \in \CE$ is an accepting state whenever $b \in \CT$.
\end{itemize}
\end{framed}
\noindent
We will interpret $\M_\Pi$ as an NFA over the unary alphabet $\Sigma = \{ \unary \}$. Note that there can be multiple starting states; the automaton can choose the starting state nondeterministically. We remark that in case of cycles, the sets $\CH$ and $\CT$ are empty which transforms an NFA into a nondeterministic semiautomaton (i.e., an automation having no starting or accepting states). In the following part we will see how to view the constructed automata. 

We define the following concepts:
\begin{definition}[generating paths and cycles]
Automaton $\M$ can \emph{generate} the cycle $(x_1, x_2,\allowbreak \dotsc,\allowbreak x_m)$ if each $x_i$ is a state of $\M$, there is a state transition from $x_i$ to $x_{i+1}$ for each $i < m$, and there is a state transition from $x_m$ to $x_1$.

Automaton $\M$ can \emph{generate} the path $(x_1, x_2, \dotsc, x_m)$ if each $x_i$ is a state of $\M$, $x_1$ is a starting state, $x_m$ is an accepting state, and there is a state transition from $x_i$ to $x_{i+1}$ for each $i < m$.
\end{definition}
Note that $\M$ can generate cycles even if there are no starting states or accepting states.

\begin{example}
\renewcommand{\TH}{\mathrm{TH}}
\newcommand{\HT}{\mathrm{HT}}
\newcommand{\MM}{\mathrm{MM}}
Consider the state machines in Figure~\ref{fig:ex-cycle}. The state machine for consistent orientation can generate the following cycles:
\begin{align*}
& (\HT),
\ (\TH),
\ (\HT,\, \HT),
\ (\TH,\, \TH),
\ (\HT,\, \HT,\, \HT),
\ (\TH,\, \TH,\, \TH),
\ \dotsc
\end{align*}
The state machine for maximal matching can generate the following cycles:
\begin{align*}
& (11,\, \MM),
\ (\MM,\, 11),
\ (10,\, 01,\, \MM),
\ (01,\, \MM,\, 10),
\ (\MM,\, 10,\, 01), \\
& (11,\, \MM,\, 11,\, \MM),
\ (\MM,\, 11,\, \MM,\, 11),
\ \dotsc
\end{align*}
\end{example}

\begin{remark}
If we start with a symmetric problem, the automaton will be \emph{mirror-symmetric} in the following sense: there is a state transition $(a,b) \to (c,d)$ if and only if there is a state transition $(d,c) \to (b,a)$, and the automaton can generate $(x_1 y_1, \dotsc, x_m y_m)$ if and only if it can generate $(y_m x_m, \dotsc, y_1 x_1)$. All automata in Figure~\ref{fig:ex-cycle} have this property, while in Figure~\ref{fig:ex-path} only automata (a) and (d) are mirror-symmetric.
\end{remark}

\paragraph{Automata capture node-edge-checkable problems.}

These observations follow directly from the definitions:
\begin{itemize}
    \item Let $\Pi$ be a symmetric or asymmetric problem. Automaton $\M_\Pi$ can generate a cycle $(x_1, x_2, \dotsc, x_m)$ if and only if the following is a feasible solution for problem $\Pi$: Take a \emph{directed} cycle with $m$ nodes and $m$ edges and walk along the cycle in the positive direction, starting at an arbitrary edge. Label the ports of the first edge with $x_1$, the ports of the second edge with $x_2$, etc.
    \item Let $\Pi$ be a symmetric problem. Automaton $\M_\Pi$ can generate a cycle $(x_1, x_2, \dotsc, x_m)$ if and only if the following is a feasible solution for problem $\Pi$: Take an \emph{undirected} cycle with $m$ nodes and $m$ edges and walk the cycle in some consistent direction, starting at an arbitrary edge. Label the ports of the first edge with $x_1$, the ports of the second edge with $x_2$, etc.
    \item Let $\Pi$ be a symmetric or asymmetric problem. Automaton $\M_\Pi$ can generate a path $(x_1, x_2, \dotsc, x_m)$ if and only if the following is a feasible solution for problem $\Pi$: Take a \emph{directed} path with $m+1$ nodes and $m$ edges and walk along the path in the positive direction, starting with the first edge. Label the ports of the first edge with $x_1$, the ports of the second edge with $x_2$, etc.
    \item Let $\Pi$ be a symmetric problem. Automaton $\M_\Pi$ can generate a path $(x_1, x_2, \dotsc, x_m)$ if and only if the following is a feasible solution for problem $\Pi$: Take an \emph{undirected} path with $m+1$ nodes and $m$ edges and walk along the path in some consistent direction, starting with the first edge. Label the ports of the first edge with $x_1$, the ports of the second edge with $x_2$, etc.
\end{itemize}
Hence, for example, the question of whether a given problem $\Pi$ is solvable in a path of length $m$ is equivalent to the question of whether $\M_\Pi$ accepts the string $\unary^m$. Similarly, the question of whether $\Pi$ is solvable in a cycle of length $m$ is equivalent to the question of whether there is a state $q$ such that $\M_\Pi$ can return to state $q$ after processing $\unary^m$.

However, the key question is what can be said about the complexity of solving $\Pi$ in a distributed setting. As we will see, this is also captured in the structural properties of $\M_\Pi$.

\subsection{Universality of the node-edge-checkable formalism}\label{sec:universality}

In this section, we show that the node-edge-checkable formalism is universal in the following sense. Let $\Pi$ be any $\LCL$ given in the standard format by listing all valid local neighborhoods of some constant radius $r$. We can construct an $\LCL$ problem $\Pi'$ that is in the node-edge-checkable formalism satisfying the following properties. 

\begin{description}
    \item[Efficiency:] Both the runtime of the construction and the description length of $\Pi'$ are polynomial in the description length of $\Pi$.
    \item[Equivalence:] Let the communication network $G$ be a cycle of length at least $2r+2$ or a path. Starting from any given legal labeling $\lambda$ for $\Pi$ on $G$, in $O(1)$ rounds we can transform it into a legal labeling $\lambda'$ for $\Pi'$. Similarly, starting from any given legal labeling $\lambda'$ for $\Pi'$ on $G$, in $O(1)$ rounds we can transform it into a legal labeling $\lambda$ for $\Pi$.
\end{description}

In particular, $\Pi$ and $\Pi'$ must have the same distributed complexity, since it is trivial to solve any graph problem on constant-size instances in $O(1)$ rounds.
Thus, if we have a black-box sequential algorithm $\mathcal{A}$ that decides the optimal distributed complexity for an $\LCL$ problem $\Pi'$ given in the node-edge-checkable formalism, then the same algorithm can be applied to an $\LCL$ problem $\Pi$ given in the standard format. Furthermore, if $\mathcal{A}$ also outputs a description of a distributed algorithm solving $\Pi'$, then this distributed algorithm can also be applied to solve $\Pi$, modulo an $O(1)$-round post-processing step.

The number of solvable and unsolvable instances for $\Pi$ and $\Pi'$ are the same for the case of paths, but they might differ by at most an additive constant for the case of cycles.
Suppose we have a black-box sequential algorithm $\mathcal{A}$ that given an $\LCL$ problem $\Pi'$ in the node-edge-checkable formalism, decides 
\[(\# \text{solvable instances}, \# \text{solvable instances}) \in \{(0, \infty),(\Theta(1), \infty),(\infty, \infty),(\infty, \Theta(1)),(\infty, 0)\}.\]
Then obviously the same algorithm can be applied to an $\LCL$ problem $\Pi$ in the standard format for the case of paths.

For solvability on cycles, we can still apply $\mathcal{A}$ to decide the solvability of $\Pi$, but things are a little more complicated as the behavior of $\Pi'$ might be different from $\Pi$ for cycles of length at most $2r+1$. To deal with this issue, instead of applying $\mathcal{A}$ directly on $\Pi'$, we apply $\mathcal{A}$ to a modified $\LCL$ problem $\Pi^\ast$ such that $\Pi^\ast$ is unsolvable on cycles of length at most $2r+1$, and its solvability on longer cycles are the same as that of $\Pi'$.
When the output of $\mathcal{A}$ on $\Pi^\ast$ is $(\Theta(1), \infty)$, $(\infty, \infty)$, or $(\infty, \Theta(1))$, then the same result applies to $\Pi$. If the output is $(0, \infty)$ or $(\infty, 0)$, we just need to further check in polynomial time the number of solvable and unsolvable instances for cycles of length at most $2r+1$ in order to determine the correct solvability of $\Pi$. 
To construct $\Pi^\ast$ from $\Pi'$, we simply let $\Pi^\ast$ be an $\LCL$ that is required to solve $\Pi'$ and another problem $\Pi^\star$ simultaneously, where the $\Pi^\star$ is an arbitrary node-edge-checkable problem that is unsolvable for cycles of length at most $2r+1$, and is solvable for all cycles of length at least $2r+2$.

\paragraph{LCL in standard form.}
Recall that an $\LCL$ problem $\Pi$ may come in many different forms. It may ask for a labeling of nodes, a labeling of edges, a labeling of half-edges, an orientation of the edges, or any combination of these. The canonical way to specify an $\LCL$ with locality radius $r$ is to list all allowed labeled radius-$r$ subgraphs in the set $\mathcal{C}$. An output labeling $\lambda$ for $\Pi$ on the instance $G$ is legal if for each node $v$ in $G$, its radius-$r$ subgraph with the output labeling $\lambda$ belongs to $\mathcal{C}$.

\paragraph{Description length.} From now on, we write $|\Pi|$ to denote the description length of the $\LCL$ problem $\Pi$. For example, if $\Pi$ only asks for an edge orientation, then $|\Pi| = 2^{O(r)}$. If $\Pi$ also asks for an edge labeling from the alphabet $\Sigma_e$ and a node labeling from the alphabet $\Sigma_v$, then $|\Pi| = {(|\Sigma_e|+|\Sigma_v|)}^{O(r)}$. Note that we only consider paths and cycles, and we assume that $\mathcal{C}$ is described using a truth table mapping each labeled radius-$r$ subgraph to yes/no. 

\paragraph{From general labels to half-edge labels.} We first observe that labels of all forms can be transformed into half-edge labels, and so from now on we can assume that $\Pi$ only have half-edge labels. Specifically, if $\Pi$ asks for an edge labeling from the alphabet $\Sigma_e$, a node labeling from the alphabet $\Sigma_v$, and also an edge orientation, then we can simply assume that $\Pi$ asks for a half-edge labeling from the alphabet $\Sigma_e \times \Sigma_v \times \{H,T\}$. That is, each half-edge label is of the form $(a \in \Sigma_e, b \in \Sigma_v, c \in \{H,T\})$.
For each edge $e$, it is required that the $\Sigma_e$-part of the two half-edges of $e$ are the same, and this label represents the edge label of~$e$. 
For each node $v$, it is required that the $\Sigma_v$-part of the two half-edges surrounding $v$ are the same, and this label represents the node label of~$v$. 
For each edge $e$, it is required that the $\{H,T\}$-part of the two half-edges of $e$ are different, and this label represents the edge orientation of~$e$. 
This reduction from a general labeling to a half-edge labeling increases the description length, but only polynomially.

\paragraph{Reducing the locality radius.} We assume that $\Pi$ only asks for a half-edge labeling from the alphabet $\lcllabel$. We will first show a construction of $\Pi'$ in the node-edge-checkable formalism satisfying all the needed requirements.
In what follows, we assume that the communication network $G$ must not be a cycle of at most $2r+1$ nodes. In particular, this ensures that any radius-$r$ subgraph of $G$ is a path, not a cycle.

Each radius-$r$ subpath $P = (u_a, \ldots, u_2, u_1, v, w_1, w_2, \ldots, w_b)$ centered at $v$ with half-edge labels from $\lcllabel$ can be represented as a string $S \in (\lcllabel \cup \{\bot\})^{4r}$, as follows. The string $S$ is of the form $S = S_1 \circ S_2 \circ S_3 \circ S_4$, where
\begin{align*}
S_1 &= \bot^{2(r-a)},\\
S_2 &= L_{u_a,\{u_a, u_{a-1}\}} L_{u_{a-1},\{u_{a}, u_{a-1}\}}
L_{u_{a-1},\{u_{a-1}, u_{a-2}\}}
L_{u_{a-2},\{u_{a-1}, u_{a-2}\}}
\cdots 
L_{v,\{u_{1}, v\}}, \\
S_3 &= L_{v,\{v, w_1\}}
\cdots
L_{w_{a-2},\{w_{a-2}, w_{a-1}\}}
L_{w_{a-1},\{w_{a-2}, w_{a-1}\}}
L_{w_{a-1},\{w_{a-1}, w_{a}\}}
L_{w_a,\{w_{a-1}, u_{a}\}},\\
S_4 &= \bot^{2(r-b)}.
\end{align*}
Here $L_{z,e}$ represents the half-edge label of the edge $e$ at the node $z$. Note that $S_2$ represents the half-edge labels within $(u_a, \ldots, u_2, u_1, v)$, and $S_3$ represents the half-edge labels within $(v, w_1, w_2, \ldots, w_b)$. This string notation is sensitive to the direction of $P$. If $P$ is reversed, then the resulting string $S$ is also reversed.

If we do not care about cycles of length at most $2r+1$, then we can simply assume that the set of allowed configurations $\mathcal{C}$ is specified by a set of strings $S \in (\lcllabel \cup \{\bot\})^{4r}$ in the above form. Note that $S \in \mathcal{C}$ implies that its reverse $S^R$ is also in $\mathcal{C}$.

Now we are ready to describe the new $\LCL$ problem $\Pi'$. In this new $\LCL$ problem, each half-edge label is a string $S \in \mathcal{C}$. We have the following constraints.

\begin{description}
\item[Node constraint:] For each node $v$, let $S$ and $S'$ be the two half-edge labels surrounding $v$, then $S'$ is the reverse of $S$. Furthermore, if $v$ is of degree-1, then the length-$2r$ prefix of $S$ must be $\bot^{2r}$.
\item[Edge constraint:] For each edge $e$, let $S$ and $S'$ be the two half-edge labels of $e$, then the length-$2(r-1)$ suffix of $S$ is the reverse of the length-$2(r-1)$ suffix of $S'$.
\end{description}

Intuitively, a half-edge label $S$ on the $v$-side of the edge $e=\{u,v\}$ is intended to represent the radius-$r$ subgraph $P$ centered at $v$, where the direction of $P$ is chosen as $v \rightarrow u$. The above constraints ensure that the half-edge labels must be consistent.

The transformation from a legal labeling $\lambda$ for $\Pi$ on $G$ to a legal labeling $\lambda'$ for $\Pi'$ on $G$ is straightforward, and it only costs $O(1)$ rounds. The reverse transformation is also straightforward. The description length of $\Pi'$ is $|\lcllabel|^{O(r)}$, which is polynomial in $|\Pi|$.

\begin{figure}
\centering
\includegraphics[page=3]{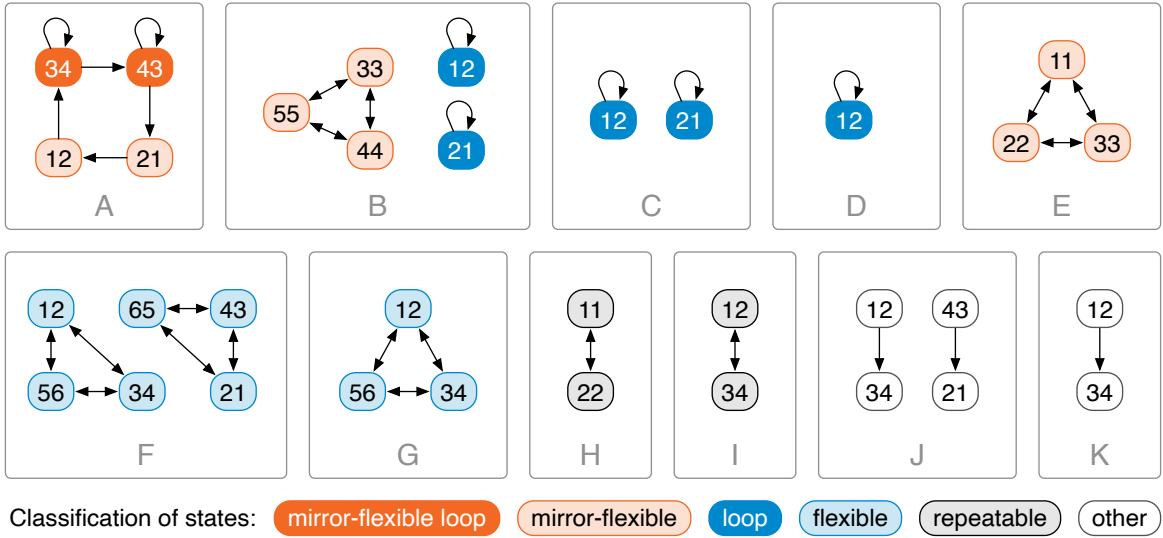}
\caption{Examples of $\LCL$ problems of each type (types A--K in Table~\ref{tab:cycles}) represented as automata, together with a classification of their states using Definitions \ref{def:repeatable-state}--\ref{def:mirror-flexible-loop}. The states are colored only by the most restrictive property. Here is a brief description of each sample problem: \textbf{A:}~orient the edges so that each consistently oriented fragment consists of at least two edges, one with the label pair 12 and at least one with the label pair 34. \textbf{B:}~either find a consistent orientation (encoded with labels 1--2) or find a proper 3-coloring of the edges (encoded with labels 3--5). \textbf{C:}~consistent orientation. \textbf{D:}~orientation in the positive direction. \textbf{E:}~edge 3-coloring. \textbf{F:}~consistent orientation together with an edge 3-coloring. \textbf{G:}~orientation in the positive direction together with an edge 3-coloring. \textbf{H:}~edge 2-coloring. \textbf{I:}~orientation in the positive direction together with an edge 2-coloring. \textbf{J--K:}~problems only solvable on paths of length at most $2$ (assuming appropriate starting and accepting states).}\label{fig:cycles}
\end{figure}

\section{Classification of all \tLCL{} problems on cycles}\label{sec:cycles}

We will now discuss our classification of $\LCL$ problems on cycles. Consider a problem $\Pi$ and its corresponding automation $\M_\Pi$. In what follows, if we have two states $ab$ and $cd$ of $\M_\Pi$, a \emph{walk} from $ab$ to $cd$ (denoted by $ab \leadsto cd$) is a sequence of state transitions starting at state $ab$ and ending at state $cd$. We introduce the following definitions; see Figure~\ref{fig:cycles} for examples.
\begin{definition}[repeatable state]\label{def:repeatable-state}
State $ab \in \CE$ is repeatable if there is a walk $ab \leadsto ab$ in~$\M_\Pi$.
\end{definition}
\begin{definition}[flexible state~\cite{Brandt2017}]\label{def:flexible-state}
State $ab \in \CE$ is flexible with flexibility $K$ if for all $k \ge K$ there is a walk $ab \leadsto ab$ of length exactly $k$ in~$\M_\Pi$.
\end{definition}
\begin{definition}[loop]\label{def:loop}
State $ab \in \CE$ is a loop if there is a state transition $ab \to ab$ in~$\M_\Pi$.
\end{definition}

Observe that each defined property of a state is a proper strengthening of the previous property (i.e.\ each loop is a flexible state and each flexible state is a repeatable state).

For a symmetric problem $\Pi$ we also define:
\begin{definition}[mirror-flexible state]\label{def:mirror-flexible-state}
State $ab \in \CE$ is mirror-flexible with flexibility $K$ if for all $k \ge K$ there are walks $ab \leadsto ab$, $ab \leadsto ba$, $ba \leadsto ab$, and $ba \leadsto ba$ of length exactly $k$ in~$\M_\Pi$.
\end{definition}
\begin{definition}[mirror-flexible loop]\label{def:mirror-flexible-loop}
State $ab \in \CE$ is a mirror-flexible loop with flexibility $K$ if $ab$ is a mirror-flexible state with flexibility $K$ and $ab$ is also a loop.
\end{definition}
Note that if $ab$ is mirror-flexible loop, then so is $ba$, as the problem is symmetric.

\subsection{Flexibility and synchronizing words}

Flexibility is a key concept that we will use in our characterization of $\LCL$ problems. We will now connect it to the automata-theoretic concept of synchronizing words.

First, let us make a simple observation that allows us to study automata by their strongly connected components:
\begin{lemma}
Let $\M'$ be a strongly connected component of automaton $\M_\Pi$, and let $q$ be a state in $\M'$. Then $q$ is flexible in $\M_\Pi$ if and only if $q$ is flexible in $\M'$.
\end{lemma}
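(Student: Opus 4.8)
The statement is that flexibility of a state $q$ is "local" to its strongly connected component (SCC): the witnessing closed walks $q \leadsto q$ of every length $k \ge K$ can always be taken to stay inside $\M'$. One direction is immediate, so the real content is the forward implication.

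The plan is as follows. First, the easy direction: if $q$ is flexible in $\M'$ with flexibility $K$, then every closed walk $q \leadsto q$ realized in $\M'$ is also a walk in $\M_\Pi$ (since $\M'$ is a subautomaton of $\M_\Pi$ obtained by restricting to a subset of states and keeping all transitions among them), so $q$ is flexible in $\M_\Pi$ with the same flexibility bound. Second, the nontrivial direction: suppose $q$ is flexible in $\M_\Pi$, say with flexibility $K$. Take any $k \ge K$ and a closed walk $W$ of length exactly $k$ from $q$ to $q$ in $\M_\Pi$. The key observation is that a \emph{closed} walk starting and ending at $q$ can only visit states that are both reachable from $q$ and can reach $q$; by definition of strongly connected components, every such state lies in the same SCC as $q$, namely $\M'$. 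Hence $W$ already lies entirely within $\M'$, and therefore $q$ is flexible in $\M'$ with the same flexibility $K$.

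I would write the second direction cleanly: let $W = (q = x_0, x_1, \dots, x_k = q)$ be the closed walk. For each $i$, the prefix $x_0 \leadsto x_i$ witnesses that $x_i$ is reachable from $q$, and the suffix $x_i \leadsto x_k = q$ witnesses that $q$ is reachable from $x_i$. Thus $x_i$ and $q$ are mutually reachable, so $x_i$ belongs to the SCC of $q$, which is $\M'$. Since this holds for every vertex on $W$ and every edge of $W$ joins two such vertices, $W$ is a walk in $\M'$.

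The main "obstacle" — really just a point to state carefully rather than a genuine difficulty — is the precise definition of the SCC \emph{as an automaton} $\M'$: one must confirm that $\M'$ retains \emph{all} transitions of $\M_\Pi$ between states of the component (not some proper subset), so that walks inside the component transfer verbatim in both directions. Once that convention is fixed, both implications are essentially immediate from the mutual-reachability characterization of SCCs, and no length bookkeeping beyond "the walk has length exactly $k$" is needed. It may be worth remarking explicitly that this lemma is what licenses reducing questions about flexibility (and hence, via the synchronizing-word connection developed next, about $O(\log^* n)$ complexity) to the level of individual strongly connected components of $\M_\Pi$.
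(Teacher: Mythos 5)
Your proof is correct and is exactly the paper's argument, just written out in full: the paper's entire proof is the one-line observation that a closed walk $q \leadsto q$ in $\M_\Pi$ cannot leave the strongly connected component of $q$, which is precisely your mutual-reachability argument. Your additional care about the easy direction and about $\M'$ retaining all transitions between its states is fine but adds nothing beyond what the paper leaves implicit.
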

\begin{proof}
A walk from $q$ back to $q$ in $\M_\Pi$ cannot leave $\M'$.
\end{proof}

Recall that a word $w$ is called \emph{D3-directing word} \cite{Imreh1999} for NFA $\M$ if there is a state $t$ such that starting with any state $s$ of $\M$ there is a sequence of state transitions that takes $\M$ to state $t$ when it processes~$w$.
We show that this specific notion of a nondeterministic synchronizing word is, in essence, equivalent to the concept of flexibility:
\begin{lemma}
Consider a strongly connected component $\M'$ of some automaton $\M_\Pi$. The following statements are equivalent:
\begin{enumerate}[noitemsep]
    \item There is a flexible state in $\M'$.
    \item All states of $\M'$ are flexible.
    \item There is a D3-directing word for $\M'$.
\end{enumerate}
\end{lemma}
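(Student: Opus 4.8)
The plan is to prove the three implications $2 \Rightarrow 1$, $1 \Rightarrow 3$, and $3 \Rightarrow 2$. The first is trivial since $\M'$ is strongly connected, so any one flexible state means the flexibility condition is witnessed within $\M'$ (using the preceding lemma), but existence of \emph{a} flexible state is already part of statement 2 specialized.

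Actually, let me restructure: the natural cycle is $2 \Rightarrow 1 \Rightarrow 3 \Rightarrow 2$.

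\medskip

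\noindent\emph{Proof.}
$(2 \Rightarrow 1)$: Immediate.

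\smallskip
\noindent$(1 \Rightarrow 3)$: Suppose $ab$ is flexible in $\M'$ with flexibility $K$, so for every $k \ge K$ there is a walk $ab \leadsto ab$ of length exactly $k$. Since $\M'$ is strongly connected, for every state $s$ of $\M'$ fix a walk $s \leadsto ab$ of some length $\ell_s$, and a walk $ab \leadsto s$; let $D$ be the maximum over all $s$ of $\ell_s$ plus the length of the return walk $ab \leadsto ab$-padding needed. Concretely: pick any length $N$ that is at least $D + K + (\text{max return-walk length to } ab)$ and such that, for each state $s$, we can route $s \leadsto ab$ (length $\ell_s$), then loop at $ab$ for exactly $N - \ell_s - m$ steps (possible once $N - \ell_s - m \ge K$), then walk $ab \leadsto ab$ — wait, we want to land on a common target. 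Take target $t = ab$. For each $s$, route $s \leadsto ab$ in $\ell_s$ steps, then loop $ab \leadsto ab$ in exactly $N - \ell_s$ steps; this is possible provided $N - \ell_s \ge K$ for all $s$, i.e. $N \ge K + \max_s \ell_s$. Then $w = \unary^N$ is a D3-directing word taking $\M'$ to state $ab$.

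\smallskip
\noindent$(3 \Rightarrow 2)$: Suppose $w = \unary^N$ is a D3-directing word with target state $t$; so from every state $s$ there is a walk $s \leadsto t$ of length exactly $N$. Fix any state $q$ of $\M'$. We must show $q$ is flexible. Since $\M'$ is strongly connected, fix a walk $t \leadsto q$ of some length $c$. Now for any $k$ large enough we want a walk $q \leadsto q$ of length exactly $k$. Apply the D3-directing word starting from $q$: this gives $q \leadsto t$ of length $N$; then $t \leadsto q$ of length $c$ gives a walk $q \leadsto q$ of length $N + c$. To get other lengths, note that once at $q$ we can again apply the D3-word from $q$, getting $q \leadsto t$ in $N$ steps, and then instead of going straight to $q$ we first apply the D3-word \emph{from $t$}: $t \leadsto t$ in $N$ steps, repeatable any number of times, then finally $t \leadsto q$ in $c$ steps. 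This yields walks $q \leadsto q$ of every length in $N + c + N\cdot\mathbb{Z}_{\ge 0} = \{N + c + jN : j \ge 0\}$ — but that is only an arithmetic progression with common difference $N$, not all sufficiently large $k$.

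So I need to repair the gap between consecutive values. The fix: also find a \emph{short} loop. Since $w$ is D3-directing, from $t$ there is a walk $t \leadsto t$ of length $N$; so $t$ lies on a closed walk of length $N$, call its constituent states; in particular the gcd of all closed-walk lengths through $t$ within $\M'$ divides $N$. More robustly, I claim every state of $\M'$ is flexible by a direct argument: from $q$, apply the D3-word to reach $t$ ($N$ steps), apply the D3-word from $t$ to stay at $t$ ($N$ steps), so $t$ is a flexible state for multiples of... hmm. The clean route is: show $t$ itself is flexible first. From $t$: a walk $t\leadsto t$ of length $N$ exists (D3 from $t$). Also, since $\M'$ is strongly connected and nontrivial, pick any edge $t \to u \to \dots \to t$, a closed walk of some length $p$ through $t$. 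Then $t$ has closed walks of every length in the numerical semigroup generated by $N$ and $p$ — still not cofinite unless $\gcd = 1$.

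\smallskip
\textbf{The main obstacle} is exactly this: turning "closed walks of length $N$ and of length $p$ exist through $t$" into "closed walks of \emph{every sufficiently large} length exist", i.e. controlling the gcd/periodicity. I expect the real argument handles this as follows: a D3-directing word $\unary^N$ for $\M'$ also exists for every length $N' > N$ that is reachable (pad with loops at $t$), so the set of "good" lengths $N$ is closed under adding any closed-walk length through $t$ — and moreover, applying D3 from $s$ then a short detour, the set of D3-lengths is \emph{cofinite-closed}. Concretely one shows: if $\unary^N$ is D3-directing to $t$ and $\unary^{N'}$ is D3-directing to $t'$, then (routing $t \leadsto t'$ and padding) $\unary^{N+N'+c}$ is D3-directing; combined with strong connectivity this forces the set of D3-lengths to eventually be \emph{all} integers above some threshold, because if two coprime... — the slick statement is that within a strongly connected automaton over a unary alphabet, the existence of \emph{one} synchronizing length implies a flexible state, and this is where the paper likely invokes a period-reduction argument (e.g., passing to a power of $\M'$, or using that the "period" of a strongly connected component can be absorbed because we may always add one more loop at the target). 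I would isolate this as a sub-lemma: in a strongly connected $\M'$, if some state $t$ has a closed walk of length $N$ and every state reaches $t$ in exactly $N$ steps, then $t$ is flexible — proved by showing the set $\{k : \exists\, t \leadsto t \text{ of length } k\}$ is a numerical semigroup containing $N$ and, via the D3 property applied to states \emph{on} that length-$N$ closed walk, also containing $N+1$ (or $N + d$ for all small $d$), hence cofinite.

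$\square$
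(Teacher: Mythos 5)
Your implications $(2 \Rightarrow 1)$ and $(1 \Rightarrow 3)$ are fine and match the paper's argument (route each state $s$ to the flexible state in $\ell_s$ steps, then pad with a return walk of length $N-\ell_s \ge K$). The genuine gap is in $(3 \Rightarrow 2)$, and you correctly diagnose it yourself: concatenating the D3-word with a fixed walk $t \leadsto q$ only produces closed walks whose lengths lie in an arithmetic progression of common difference $N$, and nothing you write actually escapes that progression. The speculation at the end about ``period-reduction'' and passing to powers of $\M'$ is not a proof, and the lemma as stated is not established.

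The missing idea is to exploit that the D3-directing word $\unary^N$ synchronizes to $t$ from \emph{every} state, not just from $t$. To build a closed walk $t \leadsto t$ of length exactly $k$ for any $k \ge N$: first take $k-N$ steps from $t$ \emph{arbitrarily} inside $\M'$ (possible since every state of a strongly connected component has an outgoing transition), landing at some uncontrolled state $x$; then apply the D3-word from $x$ to return to $t$ in exactly $N$ steps. This shows $t$ is flexible with flexibility $N$ outright --- there is no gcd or periodicity obstacle to overcome --- and then your $(1 \Rightarrow 2)$-style argument (or the paper's) propagates flexibility to all states. Incidentally, the tentative sub-lemma in your last paragraph could also be made to work (a transition $t \to u$ followed by the D3-word from $u$ gives a closed walk of length $N+1$, so $\gcd\{N, N+1\}=1$ and the additively closed set $L_t$ is cofinite), but you did not carry it out, and the direct ``wander then synchronize'' argument is both shorter and gives an explicit flexibility bound.
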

\begin{proof}
(1)$\implies$(2): Assume that state $q$ has flexibility $K$. Let $x$ be another state in $\M'$. As it is in the same connected component, there is some $r$ such that we can walk from $x$ to $q$ and back in $r$ steps. Therefore for any $k \ge K$ we can walk from $x$ back to $x$ in $k+r$ steps by following the route $x \leadsto q \leadsto q \leadsto x$. Hence $x$ is a flexible state with flexibility at most $K+r$.

(2)$\implies$(3): Assume that state $q$ has flexibility $K$, and there is a walk of length at most $r$ from any state $x$ to state $q$. Then we can walk from any state $x$ to $q$ in exactly $r+K$ steps: first in $r' \le r$ steps we can reach $q$ and then in $K+r-r' \ge K$ steps we can walk from $q$ back to itself. Hence $w = \unary^{K+r}$ is a D3-directing word for automaton $\M'$ that takes it from any state to state $q$.

(3)$\implies$(1): Assume that there is some D3-directing word $w = \unary^K$ that can take one from any state of $\M'$ to state $q$ in exactly $K$ steps. Then we can also walk from $q$ to itself in $k$ steps for any $k \ge K$: first take $k-K$ steps arbitrarily inside $\M'$, and then walk back to $q$ in exactly $K$ steps.
\end{proof}

Hence, in what follows, we can freely use any of the above perspectives when reasoning about the distributed complexity of $\LCL$ problems. Mirror-flexibility can be then seen as a mirror-symmetric extension of D3-directing words.

There is also a natural connection between flexibility and \emph{Markov chains}. Automaton $\M_\Pi$ over the unary alphabet can be viewed as the diagram of a Markov chain for unknown probabilities of the transitions. If we assume that every edge will have a non-zero probability, then a strongly connected component of the automaton is an \emph{irreducible} Markov chain, and in such a component the notion of flexibility coincides with the notion of \emph{aperiodicity}.

\subsection{Results}

Our main result is the classification presented in Table~\ref{tab:cycles}; see also Figure~\ref{fig:cycles} for some examples of problems in each class. What was already well-known by prior work \cite{Chang2019,Balliu2019decidable} is that there are only three possible complexities: $O(1)$, $\Theta(\log^* n)$, and $\Theta(n)$. However, our work gives the first concise classification of exactly which problems belong to which complexity class. In Section~\ref{sec:cycles-correct} we show that our classification of locally checkable problems on cycles or paths into types A--K, defined by properties of the automaton, is correct and complete.

The entire classification \emph{can be computed efficiently}. In particular, for all of the defined properties (repeatable states, flexible states, loops, mirror-flexible states and mirror-flexible loops) a polynomial-time algorithm can determine if an automaton contains a state with such a property. The non-trivial cases here are flexibility and mirror-flexibility; we present the proofs in Section~\ref{sec:efficient}.

\subsection{Key building blocks}

\paragraph{The role of mirror-flexibility.}

Consider the following problem that we call \emph{distance-$k$ anchoring}; here the selected edges are called \emph{anchors}:
\begin{definition}\label{def:anchoring}
A distance-$k$ anchoring is a maximal subset of edges that splits the cycle in fragments of length at least $k-1$.
\end{definition}
\noindent This problem can be solved in $O(\log^* n)$ rounds (e.g.\ by applying maximal independent set algorithms in the $k$th power of the line graph of the input graph). Now consider an $\LCL$ problem $\Pi$ that has a flexible state $q$ with flexibility $k$. It is known by prior work \cite{Brandt2017} that we can now solve $\Pi$ on directed cycles in $O(\log^* n)$ rounds, as follows: Solve distance-$k$ anchoring and label the anchor edges with the label pair of state $q$. As state $q$ is flexible, we can walk along the cycle from one anchor to another, and find a way to fill in the fragment between two anchors with a feasible label sequence.

Mirror-flexibility plays a similar role for undirected cycles: the key difference is that the anchor edges cannot be consistently oriented, and hence we need to be able to also fill a gap between state $q = ab$ and its mirror $q' = ba$, in any order. It is easy to see that mirror-flexibility then implies $O(\log^* n)$-round solvability---what is more surprising is that the converse also holds: $O(\log^* n)$-round solvability necessarily implies the existence of a mirror-flexible state.

\paragraph{A new canonical problem for constant-time solvability.}

One of the new conceptual contributions of this work is related to the following problem, which we call \emph{distance-$k$ orientation}:
\begin{definition}\label{def:orientation}
A distance-$k$ orientation is an orientation in which each consistently oriented fragment has a length at least $k$.
\end{definition}
\noindent The problem is trivial to solve in directed cycles in $0$ rounds, but the case of undirected cycles is not equally simple. However, with some thought, one can see that the problem can be solved in $O(1)$ rounds also on undirected cycles \cite{Chang2019}. This shows that there are infinite families of nontrivial $O(1)$-time solvable problems, and hence it seems at first challenging to concisely and efficiently characterize all such problems. However, as we will see in Section~\ref{sec:cycles-correct}, distance-$k$ orientation can be seen as the \emph{canonical $O(1)$-time solvable problem} on undirected cycles. We show that any problem $\Pi$ that is $O(1)$-time solvable on undirected cycles has to be of type A (see Table~\ref{tab:cycles}), and any such problem can be solved in two steps: first find a distance-$k$ orientation for some constant $k$ that only depends on the structure of $\M_\Pi$, and then map the distance-$k$ orientation to a feasible solution of $\Pi$.

We  summarize the key new observations related to undirected cycles as follows:
\begin{framed}
\centering
$\Theta(1)$ rounds $\iff$ mirror-flexible loop $\iff$ solvable with distance-$k$ orientation

$\Theta(\log^* n)$ rounds $\iff$ mirror-flexible state $\iff$ solvable with distance-$k$ anchoring
\end{framed}

\section{Classification of all \tLCL{} problems on paths}\label{sec:paths}

So far we have discussed $\LCL$ problems on cycles; let us now have a look at the case of paths. We have already presented the classification for both cases in Table~\ref{tab:cycles}.
In what follows, we discuss the key new aspects that arise in paths in comparison with the case of cycles.

\paragraph{What is similar: distributed complexity.}

Broadly speaking, \emph{efficient distributed solvability} on paths is not that different from efficient solvability on cycles. Consider an $\LCL$ problem $\Pi$ and the state machine $\M_\Pi$. As a first, preprocessing step, we have to \emph{remove all states that are not reachable from a starting state, and all states from which there is no path to an accepting state}---such states can never appear in any feasible labeling of a path. The removal of irrelevant states can be done in polynomial time, and hence throughout this work we assume that such states have already been eliminated and, to avoid trivialities, the resulting automaton is nonempty.

Now consider, for example, the case of directed paths. If there is a loop $q$ in $\M_\Pi$, we can solve $\Pi$ in constant time. By assumption $q$ can be reached from some starting state $s$ and we can reach some accepting state $t$ from $q$. Hence near the endpoints of a path, we can label according to the walks $s \leadsto q$ and $q \leadsto t$, and fill in everything in between with $q$; the round complexity is simply the maximum of the lengths of the (shortest) walks $s \leadsto q$ and $q \leadsto t$. Similarly, if $q$ is not a loop but a flexible state with flexibility $k$, we can find a distance-$k$ anchoring for the internal part of the path, use $q$ at the anchor points, and fill the gaps just like in the case of a cycle. The case of undirected paths and mirror-flexibility is analogous.

Furthermore, negative results on cycles imply negative results on paths. To see this, consider a hypothetical algorithm $\A$ that solves $\Pi$ efficiently in directed paths. Then we could also apply $\A$ to each local neighborhood of a long directed cycle, and hence $\A$ would also solve $\Pi$ efficiently in directed cycles. If $\Pi$ cannot be solved in $o(n)$ rounds in directed cycles, it cannot be solved in $o(n)$ rounds in directed paths, either. The same holds for the undirected case. Hence the classification of distributed complexities in Table~\ref{tab:cycles} generalizes to paths almost verbatim.

\paragraph{What is new: solvability.}

In directed cycles, global problems (i.e., problems of round complexity $\Theta(n)$, types H and I) came in only one possible flavor: there are infinitely many solvable instances and infinitely many unsolvable instances. A simple example is the problem of finding a proper $2$-coloring: even cycles are solvable and odd cycles are unsolvable. Our classification for cycles implies that it is not possible to have an $\LCL$ problem of complexity $\Theta(n)$ in directed cycles that is always solvable.

This is clearly different in directed paths. As a simple example, $2$-coloring a path is a global problem on directed paths that is always solvable. Figure~\ref{fig:ex-path} shows both examples of $\LCL$s that are solvable in all paths (e.g.\ $2$-coloring), and examples of $\LCL$s that are solvable in infinitely many paths and unsolvable in infinitely many paths (e.g.\ $2$-coloring in which all endpoints must have color $1$). It is also easy to construct problems that are solvable in all but finitely many instances and problems that are solvable only in finitely many instances. However, can we efficiently tell the difference between these cases if we are given a description of an $\LCL$ problem?

This is a question in which the automata-theoretic perspective gives direct answers. In essence, the question is rephrased as follows: for which values of $k$ a nondeterministic finite automaton $\M$ accepts the unary string $\unary^k$; whether $\M$ accepts all such strings is the classical \emph{universality problem} \cite{HOLZER11survey} for unary languages. Prior work directly implies the following (see Section~\ref{app:path-solvability} for the details):
\begin{itemize}
    \item \textbf{\boldmath $0$ vs.\ $\Theta(1)$ unsolvable instances:} Consider the following decision problem: given an automaton $\M$, answer ``yes'' if $\M$ accepts all strings, ``no'' if $\M$ rejects at least one but finitely many strings, and answer ``yes'' or ``no'' otherwise. This problem can be solved in polynomial time, as a consequence of Chrobak's theorem~\cite{AnthonyWidjajaTo09,Chrobak86}.
    \item \textbf{\boldmath $0$ vs.\ $\infty$ unsolvable instances:} Consider the following decision problem: given an automaton $\M$, answer ``yes'' if $\M$ accepts all strings, ``no'' if $\M$ rejects infinitely many strings, and answer ``yes'' or ``no'' otherwise. This is a well-known $\coNP$-complete problem~\cite{Stockmeyer73}.
\end{itemize}

\subsection{Complexity of deciding solvability in paths}\label{app:path-solvability}

We show in Theorem~\ref{thm:chrobak} (see below) that the unary NFA universality problem becomes polynomial-time solvable once we have a promise that $\M$ rejects only finitely many strings. The theorem implies that distinguishing between 0 unsolvable instances and $\Theta(1)$ unsolvable instances is in polynomial time, for both $\LCL$s on paths and on cycles. Although the automaton $\M$ used in the node-edge-checkable formalism has a different acceptance condition than that of the standard NFA, it is straightforward to transform $\M$ into an equivalent NFA with the standard NFA acceptance condition (i.e., there is one starting state $q_0 \in Q$, and a set of accepting states $F$).

\begin{theorem}\label{thm:chrobak}
There is a polynomial-time algorithm $\mathcal{A}$ that achieves the following for any given unary NFA $\M$. If $\M$ does not reject any string, then the output of $\mathcal{A}$ is Yes. If $\M$ rejects at least one but only finitely many strings, then the output of $\mathcal{A}$ is No. If $\M$ rejects infinitely many strings, the output of $\mathcal{A}$ can be either No or Yes.
\end{theorem}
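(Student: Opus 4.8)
The plan is to reduce everything to Chrobak's structural theorem for unary NFAs, of which the only feature we really need is a \emph{polynomial} bound on the pre-period of the accepted-length set. First I would carry out the routine preprocessing already indicated above: rewrite the node-edge-checkable automaton $\M$ as an ordinary unary NFA with a single initial state and a set of accepting states recognizing the same language; this costs only a constant blow-up in the number of states. Write $n$ for the number of states of the resulting NFA and $R = \{ m \in \N_0 : \unary^m \notin L(\M) \}$ for the set of \emph{rejected} lengths. The theorem asks us to distinguish $R = \emptyset$ from ``$R$ nonempty and finite'', with no requirement imposed when $R$ is infinite.

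The key input is Chrobak's theorem together with the correction and polynomial-time algorithm of To~\cite{Chrobak86,AnthonyWidjajaTo09}: every $n$-state unary NFA is equivalent to one in Chrobak normal form, consisting of a tail path of length $t_0 = O(n^2)$ followed by a nondeterministic choice among disjoint cycles of total length at most $n$. The only consequence I need is that the predicate $m \mapsto [\,\unary^m \in L(\M)\,]$ is \emph{periodic for $m \ge t_0$}: there is an integer $L \ge 1$ (the lcm of the cycle lengths---possibly exponential, but we will never compute it) with $\unary^m \in L(\M) \iff \unary^{m+L} \in L(\M)$ for every $m \ge t_0$. Hence $R \cap [t_0, \infty)$ is a periodic subset of $[t_0, \infty)$, and a periodic set of naturals is finite exactly when it is empty. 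Therefore $R$ is finite iff $R \subseteq [0, t_0)$, and in that case $R = \emptyset$ iff moreover $R \cap [0, t_0) = \emptyset$.

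This yields the algorithm $\mathcal{A}$ directly: compute the bound $t_0 = O(n^2)$, then simulate $\M$ on $\unary^0, \unary^1, \dotsc, \unary^{t_0 - 1}$ in a single sweep, maintaining the set $S_j \subseteq Q$ of states reachable after reading $\unary^j$ (updated in $O(n^2)$ time per step and tested against the accepting set), to decide whether $\M$ rejects any of these $O(n^2)$ strings; output \textsf{No} if it does and \textsf{Yes} otherwise. For correctness: if $R = \emptyset$ then $R \cap [0, t_0) = \emptyset$ and $\mathcal{A}$ outputs \textsf{Yes}; if $R$ is nonempty and finite then the periodicity argument gives $\emptyset \neq R \subseteq [0, t_0)$, so the sweep finds a rejected string and $\mathcal{A}$ outputs \textsf{No}; if $R$ is infinite, either answer is allowed. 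The running time is polynomial in $n$.

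The main obstacle---indeed the only nontrivial ingredient---is the polynomial pre-period bound $t_0 = O(n^2)$ supplied by Chrobak's theorem; without it the rejected set might first become periodic only after exponentially many lengths, so no polynomial-size window would separate ``rejects nothing'' from ``rejects finitely many''. It is worth noting that Chrobak's original argument for this bound contained a gap, which is why I would cite To's corrected version and its accompanying polynomial-time construction. Everything else is intentionally crude: we neither compute the period $L$ (which can be exponential) nor decide whether the cycles' residue classes cover $\N_0$, because telling an empty set apart from a nonempty finite one only requires inspecting the bounded pre-period window $[0, t_0)$.
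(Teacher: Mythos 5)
Your proposal is correct and follows essentially the same route as the paper's proof: both invoke Chrobak's theorem (with To's correction) to obtain a polynomial bound on the pre-periodic part of the accepted-length set, observe that beyond this bound acceptance is periodic so any rejection there forces infinitely many rejections, and then decide the promise problem by subset-simulating the NFA on all strings of length up to that polynomial bound. The only cosmetic difference is that you phrase the key step as eventual periodicity of the rejected-length set, whereas the paper argues directly that the set of reachable states after $L$ steps recurs with period equal to the product of the cycle lengths; these are the same argument.
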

\begin{proof}
This is an immediate consequence of Chrobak's theorem~\cite{AnthonyWidjajaTo09,Chrobak86}, which shows that any unary NFA $\M$ is equivalent to some NFA $\M'$ in the Chrobak normal form, and the number of states in $\M'$ is at most $|Q|^2$. An NFA $\M'$ is in Chrobak normal form if it can be constructed as follows. Start with a directed path $P = (q_0 \rightarrow q_1 \rightarrow \cdots \rightarrow q_m)$ and $k$ directed cycles $C_i = (r_{0,i} \rightarrow r_{1,i} \rightarrow \cdots \rightarrow r_{\ell_i, i} \rightarrow r_{0,i})$, for each $i \in \{1, \ldots, k\}$, where $\ell_i$ is the length of $C_i$. Add a transition from $q_m$ to $r_{0,i}$ for each $i \in \{1, \ldots, k\}$. The starting state is $q_0$. The set of accepting states $F$ can be arbitrary.

The algorithm $\mathcal{A}$ works as follows. It tests whether $\M$ accepts all strings of length at most $|Q|^2$. If so, then the output is Yes; otherwise, the output is No. To see the correctness, we only need to show that whenever $\M$ rejects at least one but only finitely many strings, then the output of $\mathcal{A}$ is No. 
To show this, it suffices to prove that if there is a string $w$ of length higher than $|Q|^2$ that is rejected by $\M$, then there must be infinitely many strings rejected by $\M$. 

Let $L$ be the length of $w$. Now consider some NFA $\M'$ that is in the Chrobak normal form and is equivalent to $\M$. 
We can assume that the number of states in $\M'$ is at most $|Q|^2 < L$.
Define $S$ to be the set of states that is reachable from $q_0$ in $\M'$ via a walk of length exactly $L$.
Since the number of states in $\M'$ is smaller than the length $L$ of $w$, the set $S$ contains exactly one state from each cycle $C_i$. 
Since $w$ is rejected, all states in $S$ are not accepting states.
It is clear that for any non-negative integer $k$, the set of states that is reachable from $q_0$ in $\M'$ via a walk of length exactly $L + k \prod_{1 \leq i \leq k} \ell_i$ is also $S$. Hence $\M'$ (and also $\M$) rejects infinitely many strings.
\end{proof}

Theorem~\ref{thm:nphard} (see below) is a well-known result of $\coNP$-completeness of testing universality (i.e., $\LL(\M) = \Sigma^\ast$) of a unary NFA. 
To see that the same hardness result applies to the analogous question of solvability of $\LCL$s on paths, given any NFA $\M$, we construct a finite state machine $\M^\ast$ representing an $\LCL$ in the node-edge-checkable formalism such that $\M$ is equivalent to $\M^\ast$. For each transition $a \rightarrow b$ in $\M$, add the state $(a,b)$ to $\M^\ast$. Add a transition $(a,b) \rightarrow (c,d)$ in $\M^\ast$ if $b = c$.
Each state $(a,b)$ with $a = q_0$, where $q_0$ is the starting state of $\M$, is designated as a starting state of $\M^\ast$.
Each state $(a,b)$ with $b \in F$, where $F$ is the set of accepting states of $\M$, is designated as an accepting state of $\M^\ast$. Now the new finite state machine $\M^\ast$ represents an $\LCL$ on paths.
Note that this reduction only works for $\LCL$ on paths---the same solvability problem on cycles can be solved in polynomial time.

\begin{theorem}[Stockmeyer and Meyer \cite{Stockmeyer73}]\label{thm:nphard}
Given a unary NFA $\M$, the following problem is $\NP$-hard. If $\M$ reject zero string, then the output is required to be No. If $\M$ rejects at least one but only finitely many strings, then the output can be either No or Yes. If $\M$ rejects infinitely many strings, the output is required to be Yes.
\end{theorem}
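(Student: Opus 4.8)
The plan is to prove $\NP$-hardness by a polynomial-time reduction from $3$-SAT, following the classical Chinese-Remainder encoding of Stockmeyer and Meyer. Given a $3$-CNF formula $\phi$ over variables $x_1,\dots,x_m$ with clauses $C_1,\dots,C_k$, let $p_1 < p_2 < \dots < p_m$ be the first $m$ primes; by the prime number theorem (or Bertrand's postulate) $p_i = O(m\log m)$, so every $p_i$, and every product of three of them, is polynomially bounded and computable in polynomial time. I interpret each natural number $n$ as the truth assignment $\alpha_n$ defined by $\alpha_n(x_i) = \top$ iff $p_i \mid n$. Note that \emph{every} $n$ yields a well-defined assignment, so there is no ``malformed'' case to handle separately.

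For each clause $C_j$, whose literals involve (say) variables $x_a, x_b, x_c$, the event ``$\alpha_n$ falsifies $C_j$'' is a conjunction of three conditions, each of the form ``$p_\bullet \mid n$'' or ``$p_\bullet \nmid n$'', and hence depends only on $n \bmod (p_a p_b p_c)$. I realize it by a directed cycle $D_j$ of length $p_a p_b p_c$ with transitions $t \to t+1 \pmod{p_a p_b p_c}$, designated $0$-state, and state $t$ marked accepting iff $t$ satisfies that conjunction; then, reading $a^n$, the component $D_j$ ends in state $n \bmod (p_a p_b p_c)$ and accepts iff $\alpha_n$ falsifies $C_j$. Let $\M$ be the disjoint union of $D_1,\dots,D_k$ with start states the $0$-states of all the $D_j$ (a single start state is obtained by the standard transformation). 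Then $\M$ has $O\!\bigl(k\,(m\log m)^3\bigr)$ states, is built in polynomial time, and accepts $a^n$ iff $\alpha_n$ falsifies some clause, i.e.\ iff $\alpha_n$ is \emph{not} a satisfying assignment of $\phi$.

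Equivalently, $\M$ rejects $a^n$ precisely when $\alpha_n \models \phi$. If $\phi$ is unsatisfiable, $\M$ rejects no string, so any correct algorithm for the promise problem must output No. If $\phi$ is satisfiable with witness $\alpha$, the set $\{n : \alpha_n = \alpha\}$ is a nonempty union of residue classes modulo $P = p_1\cdots p_m$ (Chinese Remainder Theorem), hence infinite, and every such $n$ is rejected by $\M$; so $\M$ rejects infinitely many strings and any correct algorithm must output Yes. The crucial point is that the reduction \emph{never} lands in the don't-care regime (at least one, but only finitely many, rejected strings), so an algorithm for the promise problem decides $3$-SAT, establishing $\NP$-hardness. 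The parts that actually need care are the size bookkeeping (keeping $\M$ polynomial by using $p_i = O(m\log m)$ and a single three-prime cycle per clause rather than an $m$-fold product construction) and, more substantively, the observation that satisfiability forces \emph{infinitely} many rejections; this periodicity of the ``good'' set via CRT is the one genuinely load-bearing structural fact that makes the promise gap vacuous here.
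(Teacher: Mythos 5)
Your proof is correct, and it supplies something the paper itself does not: the paper states Theorem~\ref{thm:nphard} as a citation to Stockmeyer and Meyer with no proof, so there is no in-paper argument to compare against. Your reduction is the classical one (3-SAT via Chinese-Remainder encoding of assignments into residues modulo the first $m$ primes, one polynomially sized cycle per clause detecting falsification), and you correctly identify the one point that matters for the paper's three-way promise formulation: the reduction never produces an automaton rejecting a nonzero finite number of strings, because the set of satisfying $n$ is a union of residue classes modulo $p_1\cdots p_m$ and hence is either empty or infinite. This is exactly the gap version the paper needs, since it uses the theorem to separate ``$0$ unsolvable path instances'' from ``$\infty$ unsolvable path instances.'' Two minor points you may want to tidy up: (i) clauses with fewer than three distinct variables need a word (use the product of the primes of the distinct variables actually occurring, or pad the clause); (ii) the paper's hardness claim for $\LCL$s on paths additionally requires converting the unary NFA into the node-edge-checkable formalism, which the surrounding text of the paper handles and which is outside the scope of the statement you were asked to prove, so its omission is not a gap.
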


\section{Efficient computation of the classification of \tLCL{} problems}\label{sec:efficient}

In view of Table~\ref{tab:cycles}, the task to classify for an $\LCL$ problem $\Pi$ to which class it belongs to can be reduced to testing certain graph properties of $\M_\PP$. In this section, we show that checking whether a state $q$ is flexible or mirror-flexible can be done in polynomial time, and so deciding the optimal distributed complexity of an $\LCL$ problem $\PP$ is also in polynomial time.

\begin{definition}\label{def:Lq}
Let $Q$ be the set of states of $\M$. For each $q \in Q$ we define:
\begin{itemize}[noitemsep]
    \item $L_q$ is the set of values $\ell$ such that there is a walk $q \leadsto q$ of length $\ell$ in $\M$.
    \item $L'_q = \{ \ell \in L_q : \ell \le 2|Q|-1 \}$ is the restriction of $L_q$ to walks of length at most $2|Q|-1$.
\end{itemize}
\end{definition}

\begin{lemma} \label{lem:gcd_of_returning_walks}
For any automaton $\M$ and for any state $u$, we have $\gcd(L_u) = \gcd(L_u')$.
\end{lemma}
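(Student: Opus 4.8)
The plan is to prove the two divisibility relations $\gcd(L_u)\mid\gcd(L_u')$ and $\gcd(L_u')\mid\gcd(L_u)$ separately. The first is immediate from the inclusion $L_u'\subseteq L_u$: any common divisor of $L_u$ is a common divisor of $L_u'$, hence $\gcd(L_u)$ divides $\gcd(L_u')$. (There is no degenerate-gcd subtlety here: a shortest closed walk at $u$ may be taken to be a simple cycle, of length at most $|Q|\le 2|Q|-1$, so $L_u'$ is nonempty whenever $L_u$ is.) All the content is therefore in the reverse direction, and I would prove it in the sharper form: with $d:=\gcd(L_u')$, one has $d\mid\ell$ for \emph{every} $\ell\in L_u$.

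I would argue this by strong induction on $\ell$. If $\ell\le 2|Q|-1$ then $\ell\in L_u'$ and $d\mid\ell$ by definition of $d$, which is the base. So suppose $\ell\ge 2|Q|$ and fix a closed walk $W=(u=v_0,v_1,\dots,v_\ell=u)$ of length $\ell$. The idea is to cut $W$ at the vertex $w:=v_{|Q|}$, obtaining a prefix $W_1=(v_0,\dots,v_{|Q|})$ of length $|Q|$ from $u$ to $w$ and a suffix $W_2=(v_{|Q|},\dots,v_\ell)$ of length $\ell-|Q|$ from $w$ to $u$, and then to reroute each half back to $u$ using fixed shortest walks $P$ from $w$ to $u$ and $P'$ from $u$ to $w$, each of length at most $|Q|-1$ (these exist because $w$ lies on a closed walk through $u$, so $u$ and $w$ are mutually reachable).

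This produces three closed walks at $u$: $W_1P$, of length $|Q|+|P|\le 2|Q|-1$; $P'P$, of length $|P'|+|P|\le 2|Q|-2$; and $P'W_2$, of length $|P'|+(\ell-|Q|)\le\ell-1$. The first two lie in $L_u'$, so $d$ divides them; the third lies in $L_u$ and has length strictly less than $\ell$, so $d$ divides it by the induction hypothesis. Since
\[
  \ell \;=\; |W_1|+|W_2| \;=\; \bigl(|W_1|+|P|\bigr)\;+\;\bigl(|P'|+|W_2|\bigr)\;-\;\bigl(|P'|+|P|\bigr),
\]
$d$ divides $\ell$, which closes the induction and hence the lemma.

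The only step that genuinely needs care — and the reason the threshold is exactly $2|Q|-1$ — is the choice of the cut index. It must be large enough that the front half reattached to $u$ already has length at most $2|Q|-1$ (forcing $|W_1|\ge|Q|$, combined with $|P|\le|Q|-1$), and simultaneously small enough that the back half reattached to $u$ is strictly shorter than $W$ (which needs $|W_1|>|P'|$, guaranteed once $\ell\ge2|Q|$). Since $|P|$ and $|P'|$ depend on which vertex $w$ the cut lands on, the clean way around the apparent circularity is to commit to $|W_1|=|Q|$ up front; this single value provably meets both requirements regardless of the resulting shortest-path lengths. I expect the remaining bookkeeping to be routine, with degenerate cases such as $w=u$ (so $|P|=|P'|=0$) absorbed without any change to the argument.
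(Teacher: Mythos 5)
Your proof is correct, and it reaches the key fact --- every $\ell\in L_u$ is an integer combination of elements of $L_u'$ (allowing negative coefficients), hence divisible by $\gcd(L_u')$ --- by a genuinely different decomposition than the paper. The paper applies the pigeonhole principle to the $\ell+1\ge 2|Q|+1$ vertices of the walk to find a state visited at least three times, splits the walk into four arcs $p_1,p_2,p_3,p_4$ meeting at that state, and forms the three strictly shorter returning walks $p_1p_4$, $p_1p_2p_4$, $p_1p_3p_4$ whose lengths satisfy $\ell=-\ell_1+\ell_2+\ell_3$; it then recurses on \emph{all three} pieces until every length drops below $2|Q|$. Your version instead cuts at the fixed index $|Q|$ and reroutes via shortest walks, which forces two of the three resulting lengths to land in $L_u'$ (or be $0$) immediately, leaving a single strictly shorter element of $L_u$ and hence a clean one-branch strong induction; the paper's three-way recursion terminates too, but its termination is argued only informally. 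What the paper's argument buys is self-containedness --- it manipulates only the given walk and needs no shortest-path rerouting; what yours buys is tighter bookkeeping and an explicit treatment of the easy direction $\gcd(L_u)\mid\gcd(L_u')$ and of the nonemptiness of $L_u'$, both of which the paper leaves implicit. One cosmetic point: when $w=u$ the walk $P'P$ has length $0$ and so is not literally an element of $L_u'$; your parenthetical that $d\mid 0$ absorbs this is exactly right (the paper's proof has the analogous degenerate case when the thrice-repeated state is $u$ itself at both endpoints, which it does not remark on).
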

\begin{proof}
We show that for each $\ell \in L_u \setminus L'_u$, we can find $\ell_1,\ell_2,\ell_3 \in L_u$ such that $\ell_1,\ell_2,\ell_3 < \ell$ and $\ell = x \ell_1 + y \ell_2 + z \ell_3$ for some integers $x,y,z$. By applying this argument recursively to each $\ell_i$, we can eventually write any $\ell \in L_u$ as a linear combination of sufficiently small numbers $\ell' \in L'_u$. Hence if all values in $L'_q$ are multiples of some $d$, all values in $L_q$ have to be also multiples of $d$.

Therefore it suffices to show that for each walk $w$ of the form $u \leadsto u$ of length $\ell > 2|Q| - 1$, it is possible to find shorter returning walks $w_1, w_2, w_3$ of the form $u \leadsto u$ of lengths $\ell_1,\ell_2,\ell_3 < \ell$ such that $\ell = x \ell_1 + y \ell_2 + z \ell_3$ for some integers $x,y,z$.

We write $w = (v_1, v_2, \ldots, v_{\ell}, v_{\ell+1})$, where $v_1 = v_{\ell+1} = u$. Since this vector has $\ell+1 \geq 2n+1$ elements, by the pigeonhole principle, there exists a state $v$ that appears at least three times. Therefore, $w$ can be decomposed into four walks: $p_1 = (v_1, \ldots, v_i)$, $p_2 = (v_i, \ldots, v_j)$, $p_3=(v_j, \ldots, v_k)$, and $p_4 = (v_k, \ldots, v_{\ell+1})$, where $v_i = v_j = v_k$ and $1 \leq i < j < k \leq \ell+1$. We write $L_i$ to denote the length of $p_i$. 

Now define $w_1 = p_1 \circ p_4$, $w_2 = p_1 \circ p_2 \circ p_4$, and $w_3 = p_1 \circ p_3 \circ p_4$; the lengths of these paths are $\ell_1 = L_1 + L_4$, $\ell_2 = L_1 + L_2 + L_4$, and $\ell_3 = L_1 + L_3 + L_4$. Now the length $\ell$ of $w$ can be expressed as $\ell = -w_1 + w_2 + w_3$. Since $L_2 = j-i \geq 1$ and $L_3 = k-j \geq 1$, the three lengths $\ell_1, \ell_2, \ell_3$ are all smaller than $\ell$, as required.
\end{proof}

\begin{lemma}\label{lem:gcd_flexible}
A state $q$ is flexible if and only if $\gcd(L_q) = 1$.
\end{lemma}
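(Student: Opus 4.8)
The plan is to split the equivalence into its two implications, doing the easy one first.

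\textbf{From flexibility to $\gcd(L_q)=1$.} If $q$ is flexible with flexibility $K$, then by definition both $K \in L_q$ and $K+1 \in L_q$, so $\gcd(L_q)$ divides $\gcd(K,K+1)=1$. (If $L_q = \emptyset$ the state is not even repeatable, hence not flexible, which matches the convention that $\gcd$ of an empty set is not $1$; so throughout the converse we may assume $L_q \neq \emptyset$.)

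For the converse I would argue as follows. The crucial structural observation is that $L_q$ is \emph{closed under addition}: concatenating a walk $q \leadsto q$ of length $\ell_1$ with a walk $q \leadsto q$ of length $\ell_2$ produces a walk $q \leadsto q$ of length $\ell_1 + \ell_2$, so $\ell_1,\ell_2 \in L_q$ implies $\ell_1 + \ell_2 \in L_q$. Thus $L_q$ is a sub-semigroup of $(\mathbb{N}_{>0},+)$. Next, $\gcd(L_q)=1$ already holds for some \emph{finite} subset of $L_q$; by Lemma~\ref{lem:gcd_of_returning_walks} we may even take this finite subset inside $L'_q$. Fixing a finite set $\{a_1,\dots,a_r\}\subseteq L_q$ with $\gcd(a_1,\dots,a_r)=1$, the classical Frobenius (``Chicken McNugget'') theorem on numerical semigroups guarantees a threshold above which every integer is a non-negative integer combination $\sum_i c_i a_i$; since $L_q$ is additively closed, every such combination lies in $L_q$. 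Hence $L_q$ contains every sufficiently large integer, which is exactly the statement that $q$ is flexible.

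I do not expect a genuine obstacle here: the only non-elementary ingredient is the Frobenius theorem, which is standard. The step deserving care is the passage from the gcd of the (a priori infinite) set $L_q$ to the gcd of a finite set — this is precisely what Lemma~\ref{lem:gcd_of_returning_walks} supplies, and it is also the reason flexibility can be decided in polynomial time, since $L'_q$ depends only on returning walks of length at most $2|Q|-1$ in $\M$.
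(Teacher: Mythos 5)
Your proposal is correct and follows essentially the same route as the paper: the forward direction exploits that flexibility puts two consecutive integers in $L_q$ (the paper phrases this contrapositively), and the converse combines Lemma~\ref{lem:gcd_of_returning_walks} with the Frobenius-number bound and the additive closure of $L_q$ under concatenation of returning walks. The only cosmetic difference is that you make the numerical-semigroup structure of $L_q$ explicit, which the paper leaves implicit.
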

\begin{proof}
If $\gcd(L_q) = x > 1$, then $kx+1 \notin L_q$ and hence there is no walk $q \leadsto q$ of length $kx+1$ for any $k$, and $q$ cannot be flexible.

For the other direction, given a set of positive integers $S$ with $\gcd(S) = 1$, the \emph{Frobenius number} $g(S)$ of the set $S$ is the largest number $x$ such that $x$ cannot be expressed as a linear combination of $S$, where each coefficient is a non-negative integer. It is known that $g(S) < \max(S)^2$~\cite{shallit2008frobenius}.

By Lemma~\ref{lem:gcd_of_returning_walks}, $\gcd(L_q) = \gcd(L_q')$ and $\max(L_q') \le 2|Q|-1$. Hence $\gcd(L_q) = 1$ implies that for all $k \geq (2|Q|-1)^2$, it is possible to find a length-$k$ walk $q \leadsto q$ by combining some returning walks of length at most $2|Q|-1$, and so $q$ is flexible.
\end{proof}

We remark that the problem of calculating the Frobenius number when the input numbers can be encoded in binary is $\NP$-hard~\cite{ramirez1996complexity}. However, the flexibility of a given automaton can be nevertheless found efficiently.

\begin{lemma} \label{lem:finding_flexible_node_is_in_p}
Testing whether a state $q \in Q$ is flexible and finding its flexibility number is solvable in polynomial time.
\end{lemma}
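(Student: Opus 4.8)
The plan is to reduce the question of flexibility of a state $q$ to a computation of $\gcd(L_q)$, which by Lemma~\ref{lem:gcd_flexible} is exactly the condition we need. Lemma~\ref{lem:gcd_of_returning_walks} tells us that $\gcd(L_q) = \gcd(L'_q)$, where $L'_q$ only involves returning walks of length at most $2|Q|-1$, so it suffices to compute $\gcd(L'_q)$ from knowledge of which small lengths are realizable. The key observation is that we do not need to enumerate all of $L'_q$; we only need, for each length $\ell \le 2|Q|-1$, a single bit telling us whether a walk $q \leadsto q$ of length exactly $\ell$ exists. This is a standard reachability-with-fixed-length computation: first I would build a product-style dynamic program (or repeatedly square the adjacency matrix over the Boolean semiring) that, for each $t = 0, 1, \dots, 2|Q|-1$ and each pair of states $(q, r)$, records whether there is a walk of length exactly $t$ from $q$ to $r$. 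Reading off the diagonal entries gives us the indicator of $L'_q$ for every $q$ simultaneously, and then $\gcd(L'_q)$ is computed by a single pass over the at most $2|Q|$ relevant lengths.

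Concretely, the algorithm is: (1) compute the Boolean matrices $A^{(t)}$ for $t \le 2|Q|-1$ where $A^{(t)}_{q,r} = 1$ iff a length-$t$ walk $q \leadsto r$ exists, which takes $O(|Q|^3)$ time per step and hence $O(|Q|^4)$ time overall (or faster with fast matrix multiplication, though this is not needed); (2) for the target state $q$, set $L'_q = \{ t \le 2|Q|-1 : A^{(t)}_{q,q} = 1 \}$; (3) output ``flexible'' iff $\gcd(L'_q) = 1$, which equals $\gcd(L_q)$ by Lemma~\ref{lem:gcd_of_returning_walks} and is the flexibility criterion by Lemma~\ref{lem:gcd_flexible}. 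For the flexibility \emph{number} itself, once we know $\gcd(L_q) = 1$ we know the flexibility is at most $g(L'_q) + 1 \le (2|Q|-1)^2 + 1$ by the Frobenius bound cited in the proof of Lemma~\ref{lem:gcd_flexible}; to pin down the exact value we can extend the length-indexed reachability table up to $(2|Q|-1)^2 + 1$ steps and return the smallest $K$ such that $A^{(k)}_{q,q} = 1$ for all $k$ with $K \le k \le (2|Q|-1)^2 + 1$ — the Frobenius bound guarantees that this prefix condition certifies flexibility from $K$ onward. This is still polynomial, $O(|Q|^5)$ or so with the naive matrix product.

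The main obstacle I anticipate is not conceptual but bookkeeping: one must be careful that the length-indexed reachability is computed for walks of length \emph{exactly} $t$ (not ``at most $t$''), since the $\gcd$ argument and the Frobenius argument both depend on exact lengths; using $A^{(t)} = A^{(t-1)} \cdot A$ with $A^{(0)} = I$ handles this cleanly. A second small point to verify is that computing the flexibility number by checking a long-enough prefix of consecutive realizable lengths is actually correct — this follows because $\gcd(L_q)=1$ combined with $\max(L'_q) \le 2|Q|-1$ forces every integer $\ge (2|Q|-1)^2$ into $L_q$ (as in the proof of Lemma~\ref{lem:gcd_flexible}), so once we have seen a block of $2|Q|-1$ consecutive realizable lengths we are guaranteed all larger lengths are realizable, and the smallest start of such a block (equivalently, the smallest $K$ such that all of $K, K+1, \dots$ up to the Frobenius-style bound are in $L_q$) is exactly the flexibility number. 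Everything else is routine.
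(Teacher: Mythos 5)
Your proposal is correct and follows essentially the same route as the paper: reduce flexibility to testing $\gcd(L_q)=1$ via Lemma~\ref{lem:gcd_flexible}, replace $L_q$ by $L'_q$ via Lemma~\ref{lem:gcd_of_returning_walks}, and compute $L'_q$ by exact-length Boolean reachability. Your additional detail on extracting the flexibility number itself (scanning up to the Frobenius-style bound $(2|Q|-1)^2$ for the earliest all-realizable suffix) is a correct elaboration of a step the paper's proof leaves implicit.
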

\begin{proof}
By Lemma~\ref{lem:gcd_flexible}, it is sufficient to test if $\gcd(L_u) = 1$, and by Lemma~\ref{lem:gcd_of_returning_walks}, it suffices to find the set $L_u'$ and compute its $\gcd(L_u')$, which can be done in polynomial time.
\end{proof}

\begin{lemma} \label{finding_mr_flexible_node_is_in_p}
Testing whether a state $q \in Q$ is mirror-flexible and finding its mirror-flexibility number is solvable in polynomial time.
\end{lemma}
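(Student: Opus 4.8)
The plan is to reduce the question to the already-solved case of plain flexibility (Lemma~\ref{lem:finding_flexible_node_is_in_p}) together with a \emph{polynomial} upper bound on the mirror-flexibility number. Write $S_{u,v}$ for the set of lengths of walks $u \leadsto v$ in $\M$, so that $S_{q,q} = L_q$ of Definition~\ref{def:Lq}. By Definition~\ref{def:mirror-flexible-state}, a state $ab$ is mirror-flexible with number $K$ exactly when $K$ is the least integer with $\{K, K+1, \dots\} \subseteq S_{u,v}$ simultaneously for the four pairs $(u,v) \in \{(ab,ab),(ab,ba),(ba,ab),(ba,ba)\}$; note $S_{ab,ab} = L_{ab}$ and $S_{ba,ba} = L_{ba}$, and when $a = b$ all four pairs coincide and the claim is exactly Lemma~\ref{lem:finding_flexible_node_is_in_p}. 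So it suffices to (i) decide whether each of these four sets is \emph{cofinite} (contains all sufficiently large integers), and (ii) when they all are, output $1$ plus the largest integer missing from any of them.

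For (i) I would prove the characterization: $ab$ is mirror-flexible if and only if $ab$ and $ba$ are flexible and $ab$, $ba$ are mutually reachable in $\M$ (equivalently, lie in a common strongly connected component). For the forward direction, $L_{ab}$ and $L_{ba}$ cofinite force $ab$ and $ba$ flexible by Lemma~\ref{lem:gcd_flexible}, while $S_{ab,ba}$ and $S_{ba,ab}$ cofinite force them nonempty, hence mutual reachability. For the converse, suppose $ab$ (hence also $ba$, being in the same component) is flexible and fix a shortest walk $ab \leadsto ba$, of length $d \le |Q|-1$; by Lemma~\ref{lem:gcd_of_returning_walks} and Lemma~\ref{lem:gcd_flexible} we have $\gcd(L'_{ab}) = 1$ with $\max L'_{ab} \le 2|Q|-1$, so by the Frobenius bound $g(S) < (\max S)^2$ used in the proof of Lemma~\ref{lem:gcd_flexible}, every $\ell \ge (2|Q|-1)^2$ lies in the additive semigroup generated by $L'_{ab}$, which is contained in $L_{ab}$ (concatenation of returning walks). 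Concatenating such a returning walk at $ab$ with the fixed $ab \leadsto ba$ walk shows $S_{ab,ba} \supseteq \{\, m : m \ge (2|Q|-1)^2 + |Q| - 1\,\}$; the same argument applies to $S_{ba,ab}$ and, with $d=0$, to $L_{ab}$ and $L_{ba}$. In particular, whenever $ab$ is mirror-flexible, each of the four sets is cofinite with threshold at most $B := (2|Q|-1)^2 + |Q| - 1$, a polynomial in $|Q|$.

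Given this, the algorithm is immediate and polynomial-time: first test flexibility of $ab$ and of $ba$ via Lemma~\ref{lem:finding_flexible_node_is_in_p} and mutual reachability of $ab$, $ba$ by graph search; if any test fails, report that $ab$ is not mirror-flexible. Otherwise $ab$ is mirror-flexible, and its mirror-flexibility number is $1 + \max\{\, n \le B-1 : n \notin S_{u,v} \text{ for some of the four pairs } (u,v)\,\}$ (and $0$ if this set is empty). Each membership test ``$n \in S_{u,v}$'', i.e.\ ``does $\M$ admit a walk of length exactly $n$ from $u$ to $v$'', is decidable in polynomial time, e.g.\ by computing the $n$-th Boolean power of the transition matrix by repeated squaring (here $n \le B = O(|Q|^2)$, so $\log n = O(\log|Q|)$), or by a layered-graph reachability computation; there are only $O(B) = O(|Q|^2)$ such tests. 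The step I expect to require the most care is the polynomial bound on the threshold in (ii): without it one could not legitimately stop the search at $B$. It follows, as in the proof of Lemma~\ref{lem:gcd_flexible}, from the Frobenius bound applied to the small returning walks $L'_{ab}$, combined with the cheap observation that prefixing a shortest $ab \leadsto ba$ path costs at most $|Q|-1$ extra steps; a secondary point to keep in mind is the degenerate case $a=b$, where the four required walk families collapse to one and mirror-flexibility reduces to flexibility.
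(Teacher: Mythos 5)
Your proof is correct and follows essentially the same route as the paper's: both rest on the characterization that $ab$ is mirror-flexible if and only if $ab$ is flexible and $ab$, $ba$ are mutually reachable, reducing everything to Lemma~\ref{lem:finding_flexible_node_is_in_p} plus polynomial-time reachability tests. You additionally supply details the paper leaves implicit — the Frobenius-based polynomial bound $B = (2|Q|-1)^2 + |Q| - 1$ on the mirror-flexibility threshold and the explicit procedure for computing the number itself — which is a welcome elaboration but not a different argument.
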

\begin{proof}
Follows from Lemma~\ref{lem:finding_flexible_node_is_in_p}: $q \in Q$ is mirror-flexible if and only if $q$ is flexible and is reachable to its mirror $q'$ and $q$ can be reached back from $q'$. Reachability between two states can be tested in polynomial time.
\end{proof}

\begin{theorem} \label{deciding_round_complexity_in_p}
Given an $\LCL$ problem $\Pi$, classifying its type can be computed in polynomial time.
\end{theorem}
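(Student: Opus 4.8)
The plan is to assemble the statement from the per-property polynomial-time tests developed above, together with a constant-size table lookup. First I would make explicit the (polynomial-time) construction of $\M_\Pi$ from the tuple $\Pi = (\Gamma,\CE,\CV,\CH,\CT)$: the state set $\CE$ has size at most $|\Gamma|^2$, the transition $(a,b)\to(c,d)$ is present exactly when $(b,c)\in\CV$, and the start/accept markings come directly from $\CH,\CT$; writing down the whole transition structure costs time polynomial in $|\Gamma|$. Then I would recall the preprocessing step used in Section~\ref{sec:paths}: delete every state not reachable from a start state, and every state from which no accept state is reachable. This is two reachability computations, hence polynomial, and it leaves precisely the subautomaton relevant to feasible labelings of paths (and, a fortiori, to the fragments that matter on cycles). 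From now on we classify this preprocessed automaton.

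Next I would verify that each of the six properties indexing the rows of Table~\ref{tab:cycles} can be tested in polynomial time. Symmetry (Definition~\ref{def:symmetric}) is a direct syntactic check that $\CE$ and $\CV$ are symmetric and $\CH=\CT$, costing $O(|\Gamma|^2)$. Existence of a \emph{loop} is decided by scanning all $(a,b)\in\CE$ and testing whether $(b,a)\in\CV$. Existence of a \emph{repeatable} state amounts to detecting any directed cycle in $\M_\Pi$, e.g.\ by computing strongly connected components. The two nontrivial properties are handled by the earlier lemmas: for \emph{flexible state}, for each of the at most $|\Gamma|^2$ states $q$ we compute $\gcd(L'_q)$ and apply Lemmas~\ref{lem:gcd_of_returning_walks}, \ref{lem:gcd_flexible}, and~\ref{lem:finding_flexible_node_is_in_p}, which also yields the flexibility number; for \emph{mirror-flexible state} we additionally test reachability $q\leadsto q'$ and $q'\leadsto q$ as in Lemma~\ref{finding_mr_flexible_node_is_in_p}. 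Finally, a \emph{mirror-flexible loop} exists iff some state is simultaneously a loop and mirror-flexible, which is the conjunction of two already-computed per-state predicates. Taking the disjunction over states in each case, all six outputs are obtained in time polynomial in the size of the description of $\Pi$.

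It remains to turn these six boolean values---symmetric; and the existence of a repeatable state, a flexible state, a loop, a mirror-flexible state, a mirror-flexible loop---into the type label. By the correctness of the classification (established in Appendices~\ref{app:efficient}--\ref{app:path-solvability}), every $\LCL$ problem realizes exactly one of the eleven patterns A--K, and the realized pattern is determined by this six-tuple, so the type is recovered by a constant-size lookup against Table~\ref{tab:cycles}. Composing the construction of $\M_\Pi$, the preprocessing, the six tests, and the lookup gives the claimed polynomial-time algorithm; as an immediate consequence the distributed round complexity and the counts of solvable/unsolvable instances (apart from the single $\coNP$-complete count on unsolvable paths) follow from the same table. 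The only point that needs care beyond bookkeeping of running times is that the six properties, as computed on the preprocessed automaton, indeed pin down a unique type---equivalently, that the monotone hierarchy loop $\Rightarrow$ flexible $\Rightarrow$ repeatable (and its mirror analogue) together with symmetry leaves no combination unaccounted for---but this is exactly what the correctness proof of Table~\ref{tab:cycles} provides, and here it may be invoked as a black box.
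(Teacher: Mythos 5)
Your proposal is correct and follows essentially the same route as the paper: the paper's own proof is a one-line pointer to Lemmas~\ref{lem:finding_flexible_node_is_in_p} and~\ref{finding_mr_flexible_node_is_in_p} for the non-trivial properties (flexibility and mirror-flexibility), treating the remaining checks (symmetry, loop, repeatable state) as immediate, exactly as you do. Your version simply spells out the bookkeeping---automaton construction, reachability preprocessing, per-property tests, and the constant-size table lookup---in more detail than the paper bothers to.
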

\begin{proof}
The non-trivial cases are captured in Lemmas \ref{lem:finding_flexible_node_is_in_p} and \ref{finding_mr_flexible_node_is_in_p}.
\end{proof}

An immediate corollary of Lemma~\ref{lem:finding_flexible_node_is_in_p} is that the existence of a D3-directing word in an NFA over the unary alphabet can be decided in polynomial time, since a unary NFA $\M$ has  a D3-directing word if and only if there exists a strongly connected component $S$ such that all states in $\M$ are reachable to $S$ and the subgraph induced by $S$ admits a flexible state. For comparison, when the alphabet size is at least two, the same problem is known to be  $\PSPACE$-hard~\cite{Martyugin2014}.

\begin{corollary}
Given an NFA $\M$ over the unary alphabet, the existence of a D3-directing word    can be decided in polynomial time.
\end{corollary}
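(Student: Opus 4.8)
The plan is to reduce the decision question to the flexibility test of Lemma~\ref{lem:finding_flexible_node_is_in_p}, via a structural characterization of when a unary NFA admits a D3-directing word. The characterization I would aim for is: $\M$ has a D3-directing word if and only if there is a strongly connected component $S$ of $\M$ such that (i) every state of $\M$ can reach $S$, and (ii) some state of $S$ is flexible. The only exception is the degenerate single-state automaton with no self-loop, for which the empty word $\unary^0$ is trivially D3-directing although no state is flexible; this case is detected separately in constant time.

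For the ``if'' direction I would argue constructively. Let $t \in S$ be flexible with flexibility $F$, and let $R$ be the maximum over all states $s$ of the length of a shortest walk $s \leadsto t$; this is finite because every state reaches $S$ and $S$ is strongly connected. Set $K = R + F$. Given any state $s$, take a shortest walk $s \leadsto t$ of length $r \le R$ and extend it by a walk $t \leadsto t$ of length $K - r \ge F$, which exists by flexibility; the result is a walk $s \leadsto t$ of length exactly $K$. Hence $\unary^K$ drives every state to $t$, i.e.\ it is D3-directing.

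For the ``only if'' direction, suppose $\unary^K$ is D3-directing with target state $t$. If $K = 0$ then every state must equal $t$, the degenerate case. If $K \ge 1$, then the run of $\unary^K$ starting at $t$ that ends at $t$ has some first transition $t \to s_1$; applying the D3-directing property to the state $s_1$ yields a walk $s_1 \leadsto t$ of length $K$, and prepending $t \to s_1$ gives a walk $t \leadsto t$ of length $K+1$. Therefore $K, K+1 \in L_t$, so $\gcd(L_t) = 1$, and $t$ is flexible by Lemma~\ref{lem:gcd_flexible}; moreover every state reaches $t$ (follow any run of $\unary^K$ from that state), so the component $S$ containing $t$ satisfies (i) and (ii).

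With the characterization in hand, the algorithm is routine: compute the strongly connected components of $\M$; for each component $S$, test in linear time by a reverse graph search whether every state can reach $S$ (at most one component passes this test, since two such components would be mutually reachable and hence equal), and test whether $S$ contains a flexible state by applying Lemma~\ref{lem:finding_flexible_node_is_in_p} to its states; answer ``yes'' iff some $S$ passes both tests, after also handling the single-state-without-self-loop case directly. Every step is polynomial in $|\M|$. I expect the one genuinely delicate point to be the ``only if'' direction --- specifically the observation that the target of any nontrivial D3-directing word is itself flexible, which hinges on exhibiting the two consecutive return-walk lengths $K$ and $K+1$ and invoking the $\gcd$ criterion for flexibility; everything else rests on the flexibility machinery already developed in this appendix.
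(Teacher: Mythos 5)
Your proposal is correct and follows essentially the same route as the paper, which justifies the corollary by exactly the characterization you state (a D3-directing word exists iff some strongly connected component is reachable from all states and contains a flexible state) and then invokes Lemma~\ref{lem:finding_flexible_node_is_in_p}. You go further than the paper by actually proving both directions of the characterization---in particular the ``only if'' direction via the return-walk lengths $K$ and $K+1$ and the $\gcd$ criterion of Lemma~\ref{lem:gcd_flexible}---and by flagging the degenerate single-state, no-self-loop case where only the empty word directs; both additions are sound and strengthen an argument the paper leaves as a one-sentence assertion.
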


\section{Correctness of the classification of \tLCL{} problems}\label{sec:cycles-correct}

In this section, we show that the classification of $\LCL$ problems in Table~\ref{tab:cycles} is correct and complete. We first prove the round complexity of each type and then the solvability. The connection between the proofs and the results they establish is depicted in Table~\ref{tab:pointers}.

\subsection{Round complexity lower bounds}

\begin{table}
\newcommand{\thref}[1]{\ref{#1}}
\newcommand{\triv}{triv.}
\newcommand{\na}{---}
\newcommand{\colsp}{\hspace{3ex}}
\newcommand{\mythm}{Theorem:}
\centering
\begin{tabular}{@{}l@{\colsp\colsp\colsp}l@{\colsp}l@{\colsp}l@{\colsp}l@{\colsp}l@{\colsp}l@{\colsp}l@{\colsp}l@{}}
\toprule
Type:
&& \textbf{A} & \textbf{B} & \textbf{C}/\textbf{D} & \textbf{E} & \textbf{F}/\textbf{G} & \textbf{H}/\textbf{I} & \textbf{J}/\textbf{K} \\
\midrule
\multicolumn{8}{@{}l}{\em Number of instances:}\\
\addlinespace[2pt]
$\cdot$ {solvable cycles}     &\mythm&\thref{thm:reps_inf_sol}  &\thref{thm:reps_inf_sol}  &\thref{thm:reps_inf_sol}  &\thref{thm:reps_inf_sol}  &\thref{thm:reps_inf_sol}  &\thref{thm:reps_inf_sol}                &\thref{thm:cycl_no_rep_0_sol} \\
$\cdot$ {solvable paths}      &\mythm&\thref{thm:reps_inf_sol}  &\thref{thm:reps_inf_sol}  &\thref{thm:reps_inf_sol}  &\thref{thm:reps_inf_sol}  &\thref{thm:reps_inf_sol}  &\thref{thm:reps_inf_sol}                &\thref{thm:no_rep_impl_c_sol} \\
$\cdot$ {unsolvable cycles}   &\mythm&\thref{thm:loops_0_unsol} &\thref{thm:loops_0_unsol} &\thref{thm:loops_0_unsol} &\thref{thm:flexs_c_unsol} &\thref{thm:flexs_c_unsol} &\thref{thm:cycles_only_reps_inf_unsol}  &\thref{thm:no_rep_impl_c_sol} \\
$\cdot$ {unsolvable paths}    &\mythm&\thref{thm:flexs_c_unsol} &\thref{thm:flexs_c_unsol} &\thref{thm:flexs_c_unsol} &\thref{thm:flexs_c_unsol} &\thref{thm:flexs_c_unsol} &\na                                     &\thref{thm:no_rep_impl_c_sol} \\
\midrule
\multicolumn{8}{@{}l}{\em Round complexity for directed graphs:}\\
\addlinespace[2pt]
$\cdot$ {lower bound}        &\mythm&\triv  &\triv   &\triv   & \thref{thm:dir_no_loop_state_impl_log*}&\thref{thm:dir_no_loop_state_impl_log*} &\thref{thm:no_flexible_impl_Omega(n)}  &\triv     \\
$\cdot$ {upper bound}        &\mythm&\thref{thm:dir_loop_impl_O(1)}  &\thref{thm:dir_loop_impl_O(1)}   &\thref{thm:dir_loop_impl_O(1)}      &\thref{thm:dir_flexible_impl_O(log*n)}&\thref{thm:dir_flexible_impl_O(log*n)} &\triv  &\thref{thm:no_rep_impl_O(1)} \\
\midrule
\multicolumn{8}{@{}l}{\em Round complexity for undirected graphs:}\\
\addlinespace[2pt]
$\cdot$ {lower bound}       &\mythm&\triv  & \thref{thm:undir_no_mfloop_impl_Omega(log*)}&\thref{thm:undir_no_mflexible_impl_Omega(n)}&\thref{thm:undir_no_loop_state_impl_log*} &\thref{thm:undir_no_mflexible_impl_Omega(n)}&\thref{thm:no_flexible_impl_Omega(n)}&\triv     \\
$\cdot$ {upper bound}       &\mythm&\thref{thm:undir_mfloop_impl_O(1)}  &\thref{thm:undir_mirror_flexible_impl_O(log*n)}&\triv&\thref{thm:undir_mirror_flexible_impl_O(log*n)}&\triv&\triv&\thref{thm:no_rep_impl_O(1)} \\
\bottomrule
\end{tabular}
\caption{Connection from problem types to proofs establishing their correctness---cf.\ Table~\ref{tab:cycles}.}\label{tab:pointers}
\end{table}

In all proofs in this section, we need a technical assumption that $\M_\PP$ contains a repeatable state. This ensures that for every number $N$, we can find an $n$-node solvable instance $G$ for some $n \geq N$. This assumption is necessary: If $\M_\PP$ does not contain a repeatable state, then we can find a number $N$ such that for all $n \geq N$ the problem $\PP$ has no solution on a cycle or path of $n$ nodes, and so the round complexity of $\PP$ is trivially $O(1)$ in all solvable instances.

\begin{theorem} \label{thm:dir_no_loop_state_impl_log*}
Let $\Pi$ be an $\LCL$ problem on directed cycles or paths. Suppose that the automaton $\M_\Pi$ contains a repeatable state, but it does not contain a loop. Then the round complexity $\Pi$ is $\Omega(\log^*n)$.
\end{theorem}

\begin{proof}
We show how to turn any legal labeling $\lambda$ of $\Pi$ into an edge $3$-coloring in a constant number of rounds. As $3$-coloring of edges requires $\Omega(\log^\ast n)$ rounds~\cite{Linial1992}, so does $\Pi$.

Let $Q$ be the set of states of $\M_\Pi$, and consider a valid solution $\lambda$ of $\Pi$. Such a labeling can be easily turned into an edge $|Q|$-coloring $f$: an edge that was labeled with the pair $(a,b)$ in $\lambda$ will be colored with the color $(a,b)$ in $f$. As there are no loops in $\M_\Pi$, adjacent edges must have different label pairs and hence different colors. Finally, we can reduce the number of colors from $|Q|$ to $3$ in a constant number of rounds (w.r.t.\ to $n$) with the trivial algorithm that eliminates colors one at a time.
\end{proof}

\begin{figure}
\centering
\includegraphics[page=4]{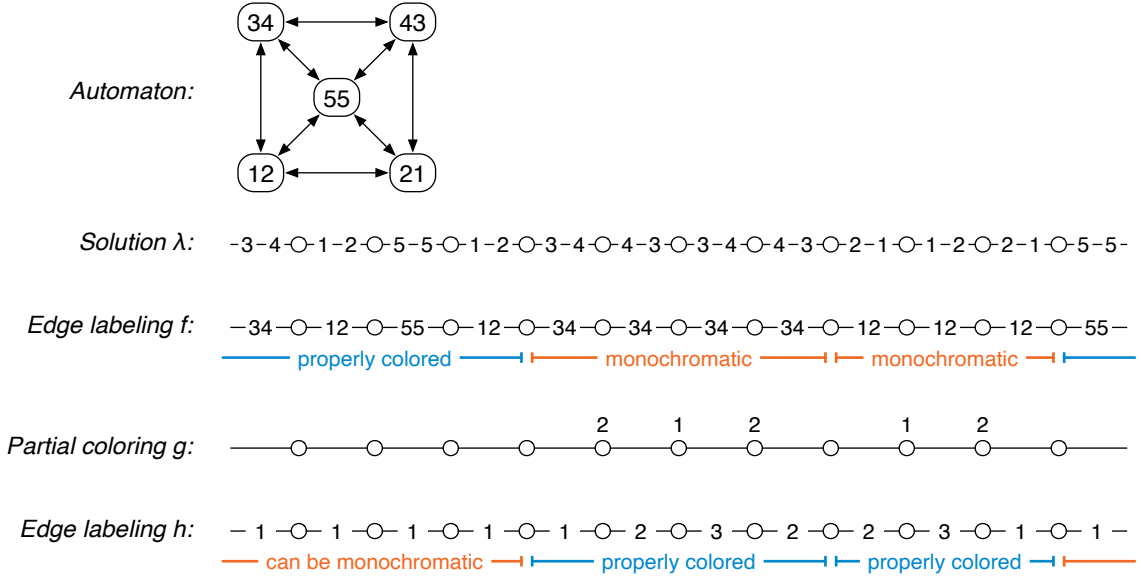}
\caption{An illustration of the proof of Theorem~\ref{thm:undir_no_loop_state_impl_log*}. Pairs $(f(e),h(e))$ form a proper edge coloring.}\label{fig:symmetrybreaking}
\end{figure}

\begin{theorem} \label{thm:undir_no_loop_state_impl_log*}
Let $\Pi$ be an $\LCL$ problem on undirected cycles or paths. Suppose that the automaton $\M_\Pi$ contains a repeatable state, but it does not contain a loop. Then the round complexity $\Pi$ is $\Omega(\log^*n)$.
\end{theorem}
\begin{proof}
We use an idea similar to Theorem~\ref{thm:dir_no_loop_state_impl_log*}, with one extra ingredient. Assume that $\lambda$ is a feasible solution of $\Pi$. First construct a labeling of the edges with (at most) $|Q|$ colors as follows: an edge that was labeled with the pair $(a,b)$ in $\lambda$ will be colored with the color $\{a,b\}$ in $f$ (note that the colors are now unordered pairs). 

Now such a labeling $f$ is not necessarily a proper coloring. There may be an arbitrarily long sequence of edges that have the same label $\{a,b\}$, for some $a < b$; such a path is called \emph{monochromatic}. However, this would arise only if $\lambda$ contains a sequence of the form $ab,\allowbreak ba,\allowbreak ab,\allowbreak ba, \dotsc$. Within such a path, we can find a partial labeling of the nodes $g$ as follows: nodes that have both ports labeled with $a$ are colored with $1$, and nodes that have both ports labeled with $b$ are colored with $2$; all other nodes are left uncolored. See Figure~\ref{fig:symmetrybreaking} for an illustration.

Now we have two ingredients: a not-necessarily-proper edge coloring $f$ with $|Q|$ colors, and a partial node coloring $g$ with $2$ colors. These complement each other: all internal nodes in monochromatic paths of $f$ are properly $2$-colored in $g$. Hence we can use $g$ to find a proper edge $3$-coloring $h$ of each monochromatic path, e.g.\ as follows: Nodes of color $1$ are active and send proposals to adjacent nodes of color $2$ (proposals are sent in the order of unique identifiers), nodes of color $2$ accept the first proposal that they get (breaking ties with unique identifiers), and this way we can find a maximal matching within each monochromatic path. Each such matching forms one color class in $h$; we delete the edges that are colored and repeat. After three such iterations all internal edges of monochromatic paths are properly colored in $h$; then $h$ is easy to extend so that also the edges near the endpoints of monochromatic paths have colors different from their monochromatic neighbors (monochromatic paths of length two are also easy to $3$-color). Now the pairs $(f(e), h(e))$ form a proper edge coloring with $3|Q|$ colors, and we can finally reduce the number of colors down to $3$.
\end{proof}

In both of the following lemmas to be applicable also to the case of a path, we always assume that the ``witness'' of any specific behavior happens somewhere in the middle of a cycle or a path and not next to the endpoints.

\begin{lemma} \label{lem:non_flexible_impl_O(n)}
Let $\Pi$ be an $\LCL$ problem that is solvable in cycles or paths of length $n$ for infinitely many values of $n$. Assume that $\A$ solves $\Pi$ in for all solvable instances, and assume that for arbitrarily large values of $n$, we can find a cycle or a path of length $n$ such that there are two edges $e_1$ and $e_2$ with the following properties:
\begin{itemize}[noitemsep]
    \item The distance between $e_1$ and $e_2$, and the distance between each $e_i$ and the nearest degree-$1$ node (if any) is more than $n/10$.
    \item Algorithm $\A$ labels both $e_1$ and $e_2$ with the same state $q$ that is not flexible.
\end{itemize}
Then the round complexity of $\A$ has to be $\Omega(n)$.
\end{lemma}

\begin{proof}
We give the proof for the case of a path; the case of a cycle is similar. To reach a contradiction, assume the complexity of $\A$ is sublinear. Pick a sufficiently large $n$ such that the algorithm runs in $r \ll n/20$ rounds and paths of length $n$ are solvable. Decompose the path $G$ in fragments
\[G = (P_0, N_1, P_1, N_2, x, P_2),\]
where $N_i$ is the radius-$r$ neighborhood of $e_i$, each $P_i$ is a path of nodes, and $x$ is one node. Now we can move one node to construct another path
\[G' = (P_0, N_1, P_1, x, N_2, P_2).\]
Path $G'$ has the same length as $G$, and hence $G'$ is also a solvable instance and $\A$ has to be able to find a feasible solution. As the radius-$r$ neighborhoods of $e_1$ and $e_2$ are the same in $G$ and $G'$, algorithm $\A$ will label them with $q$ in both $G$ and $G'$. But as $q$ is not flexible, we can this way eventually construct an instance in which the distance between the two edges with label $q$ is $k$ such that $\M_\Pi$ does not have a walk of length $k$ from $q$ back to itself, and hence $\A$ cannot produce a valid solution.
\end{proof}

\begin{lemma} \label{lem:q_without_mirror_impl_O(n)}
Let $\Pi$ be a symmetric $\LCL$ problem that is solvable in undirected cycles or paths of length $n$ for infinitely many values of $n$. Assume that $\A$ solves $\Pi$ in for all solvable instances, and assume that for arbitrarily large values of $n$, we can find a cycle or a path of length $n$ such that there is an edge $e_1$ with the following properties:
\begin{itemize}[noitemsep]
    \item The distance between $e_1$ and the nearest degree-$1$ node (if any) is more than $n/10$.
    \item Algorithm $\A$ labels both $e_1$ with a state $q_1$ that is not mirror-flexible.
\end{itemize}
Then the round complexity of $\A$ has to be $\Omega(n)$.
\end{lemma}

\begin{proof}
We give the proof for the case of a path; the case of a cycle is similar. To reach a contradiction, assume the complexity of $\A$ is sublinear. Pick a sufficiently large $n$ such that the algorithm runs in $r \ll n/20$ rounds and paths of length $n$ are solvable. For the purposes of this proof, orient the path so that the distance between $e_1$ and the end of the path is at least $n/2$. Let $e_2$ be an edge between $e_1$ and the end of the path such that the distance between $e_1$ and $e_2$, and the distance between $e_2$ and the endpoint is at least $n/10$. Decompose the path $G$ in fragments
\[G = (P_0, N_1, P_1, N_2, P_2),\]
where $N_i$ is the radius-$r$ neighborhood of $e_i$, and each $P_i$ is a path of nodes. Let $\bar N_1$ be the mirror image of path $X$, i.e., the same nodes in the opposite direction; then $\A$ will label the midpoint of $\bar N_1$ with $q'_1$, the mirrored version of state $q_1$. Construct the following paths:
\begin{align*}
G_1 &= (P_0, N_2, P_1, N_1, P_2), \\
G_2 &= (P_0, \bar N_1, P_1, N_2, P_2), \\
G_3 &= (P_0, N_2, P_1, \bar N_1, P_2).
\end{align*}
Now all such paths have length $n$, and hence they are also solvable and $\A$ is expected to produce a feasible solution. Such a solution in $G$ gives a walk $q_1 \leadsto q_2$ in $\M_\Pi$, $G_1$ gives a walk $q_2 \leadsto q_1$, $G_2$ gives a walk $q'_1 \leadsto q_2$, and $G_3$ gives a walk $q_2 \leadsto q'_1$. Putting these together, we can construct walks $q_1 \leadsto q_1$, $q_1 \leadsto q'_1$, $q'_1 \leadsto q_1$, and $q'_1 \leadsto q'_1$.

Finally, we can move nodes one by one from $P_2$ to $P_1$ in each of $G, G_1, G_2, G_3$ to construct such walks of any sufficiently large length. It follows that $q_1$ is mirror-flexible, which is a contradiction.
\end{proof}

\begin{theorem} \label{thm:no_flexible_impl_Omega(n)}
Let $\Pi$ be an $\LCL$ problem. Suppose that the automaton $\M_\Pi$ contains a repeatable state, but it does not contain a flexible state. Then the round complexity $\Pi$ is $\Omega(n)$.
\end{theorem}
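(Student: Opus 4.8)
The plan is to argue by contradiction and to route everything through Lemma~\ref{lem:non_flexible_impl_O(n)}, which is insensitive to the orientation of the cyclepath and therefore covers the directed and the undirected case (and both cycles and paths) uniformly. Suppose some algorithm $\A$ solves $\Pi$ on every solvable cyclepath in $r(n)=o(n)$ rounds. Since $\M_\Pi$ has a repeatable state, there are solvable cyclepaths of arbitrarily large length (Theorem~\ref{thm:reps_inf_sol}; concretely, a returning walk $q\leadsto q$ of length $\ell$ already witnesses solvable cycles of every multiple of $\ell$). Fix such a large solvable cyclepath $G$ of length $n$ with $r(n)$ much smaller than $n$, and let $\lambda$ be the valid labeling computed by $\A$ on $G$. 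The key point is immediate: by hypothesis $\M_\Pi$ has \emph{no} flexible state, so every state of $\M_\Pi$---in particular the state that $\lambda$ places on each edge---is non-flexible. Hence it suffices to exhibit, for arbitrarily large $n$, two edges of $G$ carrying the \emph{same} $\lambda$-state and lying far apart and far from the endpoints; Lemma~\ref{lem:non_flexible_impl_O(n)} then forces $\A$ to have round complexity $\Omega(n)$, contradicting $r(n)=o(n)$.

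To produce the two edges, delete from $G$ every edge within distance $n/10$ of an endpoint (there are none if $G$ is a cycle), leaving a contiguous block $B$ of at least $4n/5$ edges, each more than $n/10$ from either endpoint. Cut $B$ into $|Q|+1$ consecutive sub-blocks, each of length at least $|B|/(|Q|+1)=\Omega(n)$, and let $e^{(0)},\dots,e^{(|Q|)}$ be the first edges of these sub-blocks. There are $|Q|+1$ of them but $\M_\Pi$ has only $|Q|$ states, so by pigeonhole two of them, say $e_1$ and $e_2$, carry the same $\lambda$-state $q$, which we already know is non-flexible. The distance between $e_1$ and $e_2$ is at least the length of one sub-block, hence $\Omega(n)$, and each $e_i$ lies in $B$ and is thus more than $n/10$ from either endpoint. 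This is the configuration demanded by Lemma~\ref{lem:non_flexible_impl_O(n)} (its proof only needs the radius-$r$ neighborhoods of $e_1$ and $e_2$ to be pairwise disjoint and disjoint from the endpoints, with $\Omega(n)$ spare nodes for the node-shifting step, so the constant $1/10$ appearing in its statement can be replaced by any fixed fraction, and an $\Omega(n)$ separation between $e_1$ and $e_2$ is all that is used). Invoking the lemma yields the contradiction and completes the proof.

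The only genuine subtlety is the one just handled: a bare pigeonhole over the $|Q|$ states gives two equally labeled edges but they could be adjacent, whereas Lemma~\ref{lem:non_flexible_impl_O(n)} needs them $\Omega(n)$ apart; spreading $|Q|+1$ probe edges across equal-length sub-blocks repairs this. Everything else is routine: we never split into the directed and undirected cases, because Lemma~\ref{lem:non_flexible_impl_O(n)} is already phrased for arbitrary cyclepaths, and we never appeal to mirror-flexibility, because the absence of any flexible state already pins down every label $\A$ uses. (As a consistency check one can observe that, by Definition~\ref{def:mirror-flexible-state}, a mirror-flexible state $ab$ is in particular flexible; so in the undirected setting one could alternatively run the argument through Lemma~\ref{lem:q_without_mirror_impl_O(n)} with a single middle edge, but this detour is unnecessary.)
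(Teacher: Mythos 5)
Your proposal is correct and follows exactly the paper's route: the paper's own proof is the one-line observation that Lemma~\ref{lem:non_flexible_impl_O(n)} applies because every state the algorithm can output is non-flexible and some state must be used repeatedly. Your expansion of that pigeonhole step (spreading $|Q|+1$ probe edges over $\Omega(n)$-length sub-blocks to guarantee the two equally labeled edges are well separated, and noting that the constant $1/10$ in the lemma is not essential) is a faithful filling-in of the details the paper leaves implicit.
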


\begin{proof}
We can apply Lemma~\ref{lem:non_flexible_impl_O(n)}: the algorithm can only use non-flexible states, and it has to use some non-flexible state repeatedly.
\end{proof}

\begin{theorem} \label{thm:undir_no_mfloop_impl_Omega(log*)}
Let $\Pi$ be a symmetric $\LCL$ problem on undirected cycles or paths. Suppose that $\M_\Pi$ contains a repeatable state, but it does not contain a mirror-flexible loop. Then the round complexity of $\Pi$ is $\Omega(\log^*n)$.
\end{theorem}

\begin{proof}
Consider an algorithm $\A$ that solves $\Pi$, and look at the behavior of $\A$ in sufficiently large instances, far away from the endpoints of the paths (if any). There are two cases:
\begin{enumerate}
    \item Algorithm $\A$ sometimes outputs a loop state (which by assumption cannot be mirror-flexible). Then by Lemma~\ref{lem:q_without_mirror_impl_O(n)} we obtain a lower bound of $\Omega(n)$.
    \item Otherwise $\A$ essentially solves the restriction of $\Pi$ where loop states are not allowed (except near the endpoints of the path), and we can use Theorem~\ref{thm:undir_no_loop_state_impl_log*} to obtain a lower bound of $\Omega(\log^*n)$. \qedhere
\end{enumerate}
\end{proof}

\begin{theorem} \label{thm:undir_no_mflexible_impl_Omega(n)}
Let $\Pi$ be a symmetric $\LCL$ problem on undirected cycles or paths. Suppose that $\M_\PP$ contains a repeatable state, but it does not have a mirror-flexible state. Then the round complexity of $\Pi$ is $\Omega(n)$.
\end{theorem}

\begin{proof}
Again consider an algorithm $\A$ that solves $\Pi$, and look at the behavior of $\A$ in sufficiently large instances, far away from the endpoints of the paths (if any). There are two cases:
\begin{enumerate}
    \item Algorithm $\A$ sometimes outputs a flexible state (which by assumption cannot be mirror-flexible). Then by Lemma~\ref{lem:q_without_mirror_impl_O(n)} we obtain a lower bound of $\Omega(n)$.
    \item Otherwise $\A$ essentially solves the restriction of $\Pi$ where flexible states are not allowed (except near the endpoints of the path), therefore it is using some non-flexible state repeatedly far from endpoints, and Lemma~\ref{lem:non_flexible_impl_O(n)} applies. \qedhere
\end{enumerate}
\end{proof}

\subsection{Round complexity upper bounds}

Let us first consider the trivial case of automata without repeating states.

\begin{theorem} \label{thm:no_rep_impl_O(1)}
Let $\Pi$ be an $\LCL$ problem. Suppose $\M_\Pi$ does not have repeatable state. Then $\Pi$ can be solved in constant time in solvable instances.
\end{theorem}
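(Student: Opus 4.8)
The plan is to show that the hypothesis ``$\M_\Pi$ has no repeatable state'' forces every solvable instance to have bounded size, after which a trivial ``gather the whole graph'' algorithm solves $\Pi$ in $O(1)$ rounds. First I would observe that the absence of a repeatable state is exactly the statement that the state-transition graph of $\M_\Pi$ is a directed acyclic graph: a directed cycle through a state $ab$ would give a walk $ab \leadsto ab$, making $ab$ repeatable. Consequently no walk in $\M_\Pi$ can visit the same state twice, so every walk passes through at most $|Q|$ states (where $Q = \CE$ is the state set) and hence has length at most $|Q|-1$.

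Next I would translate this back to instances via the correspondence between generated cycles/paths of $\M_\Pi$ and feasible solutions of $\Pi$ established in the paragraph ``Automata capture node-edge-checkable problems''. For cycles: generating a cycle $(x_1,\dots,x_m)$ requires a closed walk of length $m$ in $\M_\Pi$, which a DAG does not contain; hence $\Pi$ has no solution on any cycle, and the claim is vacuous in the cyclic case (directed or undirected). For paths: generating a path $(x_1,\dots,x_m)$ requires a walk $x_1 \to \cdots \to x_m$ whose states are therefore pairwise distinct, so $m \le |Q|$, and thus every solvable path has at most $N := |Q|+1$ nodes. So all solvable instances have at most $N = O(1)$ nodes.

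Finally I would give the constant-time algorithm: run for $N$ rounds, so that every node learns its radius-$N$ neighborhood. On any solvable instance the graph has at most $N$ nodes, hence each node has in fact learned the entire labeled instance, including all unique identifiers; it then computes a fixed canonical feasible solution --- say, the lexicographically smallest valid port-labeling with respect to the identifiers --- and outputs the labels of its own ports. All nodes see the same data and apply the same deterministic rule, so the outputs are globally consistent and form a valid solution; on instances with more than $N$ nodes (necessarily unsolvable) a node may output arbitrarily. This runs in $O(N)=O(1)$ rounds.

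I do not expect a genuine obstacle here; the only points needing a little care are checking that ``no repeatable state'' really rules out \emph{every} closed walk (so the cyclic case is truly vacuous and path lengths are bounded by $|Q|$, not merely finite), and ensuring the gathered-graph algorithm produces a \emph{consistent} solution across nodes, which is handled by having every node apply the same identifier-based tie-breaking to the same global picture. The argument is uniform over directed and undirected paths and cycles, since the bound on walk length does not depend on orientation or on whether $\Pi$ is symmetric.
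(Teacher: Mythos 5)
Your proposal is correct and follows essentially the same route as the paper's proof: the absence of a repeatable state bounds all solvable instances to paths with at most $|Q|+1$ nodes (and rules out solvable cycles entirely), after which a brute-force constant-radius algorithm suffices. Your write-up just makes explicit the details (DAG structure of the transition graph, the consistency of the gathered-graph rule) that the paper leaves implicit.
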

\begin{proof} 
Let $Q$ be a set of states of $\M_\Pi$. As $\M_\Pi$ does not have a repeatable state, it is not solvable in any cycle, and it is only solvable in some paths of length at most $|Q|$. Hence $\Pi$ can be solved in constant time by brute force (and also in constant time all nodes can detect if the given instance is solvable.
\end{proof}

In the rest of this section, we design efficient algorithms for solving problems with flexible or mirror-flexible states. We present the algorithms first for the case of a cycle. The case of a path is then easy to solve: we can first label the path as if it was a cycle, remove the labels near the endpoints (up to distance $k$, where $k$ is bounded by the (mirror-)flexibility of a chosen (mirror-)flexible state plus the number of states in $\M_\Pi$), and fill constant-length path fragments near the endpoints by brute force. We refer to this process as \emph{fixing the ends}.

\begin{theorem} \label{thm:dir_loop_impl_O(1)}
Let $\Pi$ be an $\LCL$ problem on directed cycles or paths. Suppose $\M_\Pi$ has a loop. Then the round complexity $\Pi$ is $O(1)$.
\end{theorem}
\begin{proof}
All edges can be labeled by a loop state. In a path we will then fix the ends.
\end{proof}

\begin{theorem} \label{thm:undir_mfloop_impl_O(1)}
Let $\Pi$ be an $\LCL$ problem. Suppose $\M_\Pi$ has a mirror-flexible loop. Then the round complexity $\Pi$ is $O(1)$.
\end{theorem}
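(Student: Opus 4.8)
The plan is to exhibit a reduction from $\Pi$ to the \emph{distance-$k$ orientation} problem (Definition~\ref{def:orientation}) for a suitable constant $k$, and then invoke the fact that distance-$k$ orientation is solvable in $O(1)$ rounds on undirected cyclepaths~\cite{Chang2019}. Since a mirror-flexible loop is in particular a loop, the directed case is already covered by Theorem~\ref{thm:dir_loop_impl_O(1)}, so it suffices to treat undirected cyclepaths; note also that here $\Pi$ is necessarily symmetric, as mirror-flexible loops are only defined in that case. Fix a mirror-flexible loop $q = ab$ with flexibility $K$; then $q' = ba$ is also a (mirror-flexible) loop, and both $q,q' \in \CE$ because $\CE$ is symmetric. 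Put $k := 2K+2$. Cyclepaths of length below $k$ are handled trivially: every node sees the entire instance within a constant radius, so if a solution exists it can be computed locally (for a short cycle, e.g.\ label every edge by $q$; for a short path, additionally fix the two ends).

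For a cyclepath of length at least $k$, first compute a distance-$k$ orientation $\sigma$ in $O(1)$ rounds. Its oriented edges decompose the cyclepath into maximal consistently oriented fragments, each of length at least $k$; one checks that consecutive fragments have opposite orientation and that the node where two fragments meet is either a \emph{source} (two outgoing edges) or a \emph{sink} (two incoming edges), with each fragment having one source and one sink as its two endpoints. I now turn $\sigma$ into a feasible labeling of $\Pi$. At each source and each sink node $w$, use the identifiers of $w$'s two neighbours to break symmetry and assign the state $q$ to one of the two fragment-end edges at $w$ and $q'$ to the other; checking the port conventions, the node constraint imposed at $w$ is then exactly $(b,a) \in \CV$ (two head ports carrying $b$ and $a$ at a sink, two tail ports carrying $a$ and $b$ at a source), which holds since $ab \to ab$ is a transition of $\M_\Pi$. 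Each fragment $F$ of length $\ell \ge k$ has now received a prescribed first-edge state $s_F \in \{q,q'\}$ and a prescribed last-edge state $t_F \in \{q,q'\}$, and we must label its $\ell$ edges by a walk of $\M_\Pi$ from $s_F$ to $t_F$. To keep the labeling local we fix, once and for all (depending only on $\M_\Pi$), walks of length exactly $K$ realising $q \leadsto q$ and $q' \leadsto q$ together with their mirrors; these exist by mirror-flexibility. Label the first $K$ edges of $F$ by the chosen $s_F \leadsto q$ walk, the last $K$ edges by the chosen $q \leadsto t_F$ walk, and all $\ell - 2K \ge 2$ edges in between by $q$ (the transitions $q \to q$ in the interior exist because $q$ is a loop); the three pieces glue into a valid walk because the end walks start, respectively end, at $q$. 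Every node now determines which zone each of its incident edges lies in, and the relevant source/sink decision, from a constant-radius neighbourhood, so the whole procedure runs in $O(1)$ rounds; on paths we finally fix the $O(k)$ edges near the two degree-$1$ endpoints by brute force, as in the ``fixing the ends'' step, using that the instance is solvable.

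The main obstacle is the tension between the possibly unbounded length of a fragment --- so that a node in its interior cannot see either endpoint, and in particular cannot learn the two uncoordinated source/sink decisions bounding the fragment --- and the requirement that the labels fit together into one global solution. Mirror-flexibility is exactly what removes this obstacle: since a walk between $q$ and its mirror $q'$ of every sufficiently large length is available, a fragment can absorb any of the four possible $(s_F,t_F)$ combinations using a fixed-length interpolation near each end and the single locality-oblivious state $q$ in its interior. Everything else --- the port/constraint bookkeeping at sources and sinks, the existence of a distance-$k$ orientation on every sufficiently long cyclepath, and fixing the ends on paths --- is routine.
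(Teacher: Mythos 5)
Your proposal is correct and follows essentially the same route as the paper: reduce to the canonical distance-$k$ orientation problem, label the bulk of each consistently oriented fragment with the mirror-flexible loop $q$ (or its mirror), and use mirror-flexibility to interpolate fixed-length walks near the orientation flips, finally fixing the ends on paths. The only difference is cosmetic---you place the interpolation walks at the first and last $K$ edges of each fragment relative to explicit source/sink decisions, whereas the paper interpolates symmetrically across each fragment boundary---and your port-level verification at sources and sinks is a slightly more detailed account of the same argument.
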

\begin{proof}
Let $q$ be a mirror-flexible loop state of mirror-flexibility $k$. Let $K \ge k+2$ be an even constant. The first step is to construct a distance-$K$ orientation (Definition~\ref{def:orientation}); this can be done in $O(1)$ rounds.

We say that an edge $e$ is a \emph{boundary edge} if there is another edge $e'$ with a different orientation within distance less than $K/2$ from $e$; otherwise $e$ is an \emph{internal edge}. Note that each consistently oriented fragment contains at least one internal edge.

The internal edges are labeled as follows: each edge with orientation ``$\rightarrow$'' is assigned label $q$, and each edge with orientation ``$\leftarrow$'' is assigned label $q'$, i.e., the mirror of $q$.

We are left with gaps of length $K - 2 \ge k$ between the labeled edges. As $q$ is mirror-flexible, we can find paths $q \leadsto q'$ and $q' \leadsto q$ of length $K - 2$ to fill in such gaps. Finally, in a path we will fix the ends.
\end{proof}

\begin{theorem} \label{thm:dir_flexible_impl_O(log*n)}
Let $\Pi$ be an $\LCL$ problem on directed cycles or paths. Suppose $\M_\Pi$ has a flexible state. Round complexity of such $\Pi$ is $O(\log^*n)$.
\end{theorem}

\begin{proof}
Let $q$ be a flexible state of flexibility $k$. This time we first construct a distance-$k$ anchoring (Definition~\ref{def:anchoring}); this can be done in $O(\log^* n)$ rounds. Let the set of anchors be $I$. If an edge is in $I$, we label its ports by $q$. We are left with the gaps, which can be of size between $k-1$ and $2k$ (anchoring is maximal). As $q$ is flexible, for each gap of size $g \geq k - 1$ we can find a returning walk of length exactly $g+1 \geq k$ and fill it by the states along such walk. Finally, in a path we will fix the ends.
\end{proof}

\begin{theorem} \label{thm:undir_mirror_flexible_impl_O(log*n)}
Let $\Pi$ be an $\LCL$ problem on undirected cycles or paths. Suppose $\M_\Pi$ has a mirror-flexible state. Round complexity of such $\Pi$ is $O(\log^*n)$.
\end{theorem}

\begin{proof}
The proof is very similar to a previous proof, only with some minor changes as now we are in the undirected setting.

Let $q$ be a mirror-flexible state of flexibility $k$. First, we construct a distance-$k$ anchoring (Definition~\ref{def:anchoring}); this can be done in $O(\log^* n)$ rounds. Let the set of anchors be $I$. If an edge is in $I$, we label its ports by either $q$ or its mirror $q'$ arbitrarily (breaking symmetry with unique identifiers). We are left with the gaps, which can be of size between $k-1$ and $2k$ (anchoring is maximal). As $q$ is mirror-flexible, for each gap of size $g \geq k-1$ we can find a returning walk of length exactly $g+1 \geq k$ and fill the gap no matter the combinations of anchors ($q \leadsto q'$, $q \leadsto q$, $q \leadsto q'$ or $q' \leadsto q'$). Finally, in a path we will fix the ends.
\end{proof}

\subsection{Solvability}\label{app:solvability}

In this part, we consider the \emph{solvability} of an $\LCL$ problem. That is, for a given graph class $\GG$ (the set of all cycles of every length or the set of paths of every length), how many graphs $G \in \GG$ are solvable instances (instances that admit a legal labeling) with respect to the given $\LCL$ problem $\PP$.

\begin{theorem} \label{thm:reps_inf_sol}
Let $\Pi$ be an $\LCL$ problem. If $\M_\Pi$ has a repeatable state, then the number of solvable instances is $\infty$.
\end{theorem}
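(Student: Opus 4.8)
The plan is to use the repeatable state to manufacture infinitely many solvable cycles (and paths) directly. Let $q = ab \in \CE$ be a repeatable state. By Definition~\ref{def:repeatable-state} there is a walk $q \leadsto q$ in $\M_\Pi$; say this walk has length $\ell \ge 1$ and visits states $q = x_1, x_2, \dotsc, x_\ell, x_{\ell+1} = q$. This closed walk is itself a cycle that $\M_\Pi$ can generate (dropping the repeated final vertex): the sequence $(x_1, x_2, \dotsc, x_\ell)$ is a cycle in the sense of the ``generating paths and cycles'' definition, since there is a transition $x_i \to x_{i+1}$ for $i < \ell$ and a transition $x_\ell \to x_1$. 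By the correspondence established in the ``Automata capture node-edge-checkable problems'' paragraph, generating this cycle is the same as exhibiting a feasible labeling of the directed cycle on $\ell$ nodes; if $\Pi$ is symmetric the same labeling works on the undirected cycle of length $\ell$.

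To get infinitely many instances, I iterate the closed walk: for any positive integer $t$, concatenating the walk $q \leadsto q$ with itself $t$ times yields a closed walk of length $t\ell$, hence a generated cycle $(x_1,\dotsc,x_\ell,x_1,\dotsc,x_\ell,\dotsc)$ of length $t\ell$, and therefore a solvable cycle instance on $t\ell$ nodes. Since $t$ ranges over all positive integers, this gives solvable cycles of infinitely many distinct lengths, so the number of solvable cycle instances is $\infty$. For paths, recall the standing assumption (stated at the start of Section~\ref{sec:paths}) that every state is reachable from a starting state and can reach an accepting state; in particular there is a walk $s \leadsto q$ from some starting state $s$ and a walk $q \leadsto r$ to some accepting state $r$. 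Prepending $s \leadsto q$ and appending $q \leadsto r$ to the $t$-fold closed walk at $q$ produces, for each $t$, a generated path, hence a solvable path instance, and these have unbounded (indeed infinitely many distinct) lengths. So the number of solvable path instances is also $\infty$.

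The only subtlety — and it is minor — is making sure the lengths obtained are genuinely infinitely many, not just one repeated length: this is immediate because the $t$-fold iterate has length $t\ell$ (plus a fixed additive constant in the path case), which is strictly increasing in $t$. There is no real obstacle here; the statement is essentially a direct unpacking of the automaton-to-instance dictionary together with the observation that a closed walk can be traversed any number of times.
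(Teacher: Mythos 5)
Your proposal is correct and follows essentially the same route as the paper's own proof: iterate the closed walk $q \leadsto q$ of length $\ell$ to obtain solvable cycles of length $t\ell$ for every $t$, and for paths sandwich the iterated closed walk between walks $s \leadsto q$ and $q \leadsto t$ from a starting state and to an accepting state (which exist by the standing preprocessing assumption). No gaps.
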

\begin{proof}
Let $q$ be a repeatable state, i.e., there is a walk $q \leadsto q$ of some length $\ell$. Now for every $k \in \N$, cycles of length $k\ell$ are solvable, as we can generate cycles of the form $q \leadsto q \leadsto \dotsb$.

In paths, by assumption $q$ is reachable from some starting state $s$ and we can reach some accepting state $t$ from $q$; let $h$ be the length of a walk $s \leadsto q \leadsto t$. Now for every $k \in \N$, paths of length $h + k\ell$ are solvable, as we can generate paths of the form $s \leadsto q \leadsto q \leadsto \dotsb \leadsto q \leadsto t$.
\end{proof}

\begin{theorem} \label{thm:flexs_c_unsol}
Let $\Pi$ be an $\LCL$ problem. If $\M_\Pi$ has a flexible state, number of unsolvable instances is at most $C$, where $C$ is a constant.
\end{theorem}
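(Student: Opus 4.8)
The plan is to produce an explicit constant $C$, depending only on $\M_\Pi$, such that every cyclepath of length at least $C$ admits a solution; the number of unsolvable instances is then at most $C$. Fix a flexible state $q = ab$ of $\M_\Pi$ with flexibility $K$, so that for every $k \ge K$ there is a walk $q \leadsto q$ of length exactly $k$ in $\M_\Pi$.

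For cycles the claim is immediate from the definition of flexibility: for each $n \ge K$, a walk $q \leadsto q$ of length $n$ yields a cycle of length $n$ that $\M_\Pi$ can generate, i.e.\ a feasible solution on the cycle with $n$ nodes, in both the directed and the undirected setting (in each case a generated cycle is exactly a feasible labeling read off along a fixed consistent direction). Hence every cycle of length at least $K$ is solvable, so at most $K-1$ cycles are unsolvable.

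For paths I would invoke the standing preprocessing assumption that every state of $\M_\Pi$ is reachable from some starting state and can reach some accepting state. Choose a starting state $s$ with a walk $s \leadsto q$ of some length $a$, and an accepting state $t$ with a walk $q \leadsto t$ of some length $b$. Then for every $m \ge a + K + b$, writing $m = a + \ell + b$ with $\ell \ge K$, the concatenation of $s \leadsto q$ (length $a$), a returning walk $q \leadsto q$ (length $\ell$), and $q \leadsto t$ (length $b$) is a path that $\M_\Pi$ generates, hence a feasible solution on the path with $m$ edges. So every path of length at least $a + b + K$ is solvable, and at most $a+b+K$ paths are unsolvable. Taking $C := a + b + K$ serves for both cases.

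The argument is identical in the directed and undirected cases, since a solution always corresponds to a walk that $\M_\Pi$ can generate; no mirror-symmetry of $q$ is needed, as we only ever use one fixed orientation. The only point requiring any care is the path case: we must know that the flexible state $q$ can be connected to the endpoint constraints at all, which is exactly the content of the preprocessing assumption (without it the statement would be vacuous for paths, or one would rerun the same argument with whatever flexible state survives preprocessing). I do not expect a substantial obstacle here — the whole proof is the definition of flexibility together with the observation that a returning walk at $q$ of every sufficiently large length lifts to a solution of the corresponding size.
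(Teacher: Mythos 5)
Your proof is correct and follows essentially the same route as the paper's: cycles of length at least $K$ are solvable by a returning walk $q \leadsto q$, and paths of length at least $|s \leadsto q| + K + |q \leadsto t|$ are solvable by splicing a returning walk between the access and exit walks guaranteed by the preprocessing assumption. Your explicit remarks about why mirror-symmetry is not needed for mere solvability are accurate but not needed beyond what the paper already states.
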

\begin{proof}
Let $q$ be a flexible state with flexibility $k$. All cycles of length $n \ge k$ are now trivially solvable, as we have a walk $q \leadsto q$ of length $n$.

In paths, by assumption $q$ is reachable from some starting state $s$ and we can reach some accepting state $t$ from $q$; let $h$ be the length of a walk $s \leadsto q \leadsto t$. Now all paths of length $n \ge h+k$ are solvable, as we have a walk $s \leadsto q \leadsto q \leadsto t$ of length $n$.
\end{proof}

\begin{theorem} \label{thm:loops_0_unsol}
Let $\Pi$ be an $\LCL$ problem on cycles. If $\M_\Pi$ has a loop the number of unsolvable instances is zero.
\end{theorem}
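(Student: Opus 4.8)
The plan is to show that if $\M_\Pi$ contains a loop state $q = ab$, then $\Pi$ has a solution on every cycle, i.e.\ the number of unsolvable cycle instances is zero. The key observation is that a loop state gives us, by definition, a state transition $q \to q$ in $\M_\Pi$, and hence $\M_\Pi$ can generate the cycle $(q, q, \dotsc, q)$ of \emph{any} length $n \ge 1$. By the correspondence between generated cycles and feasible solutions established earlier (``Automata capture node-edge-checkable problems''), this immediately yields a feasible solution on the directed cycle with $n$ nodes for every $n$: label the ports of every edge with $q$.

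The only case that needs a separate remark is $n = 1$, the single-node self-loop cycle (if it is considered a valid instance): here the one edge has both its head port and tail port incident to the single node, and one checks that labeling both ports with $q$ satisfies $\CE$ (since $ab \in \CE$ because $q$ is a state) and satisfies $\CV$ at the node (since $q \to q$ is a transition, meaning $(b,a) \in \CV$). For all larger $n$ the argument is exactly the generated-cycle argument above. In the symmetric/undirected setting the same labeling works verbatim, since generating the cycle $(q,q,\dotsc,q)$ certifies a feasible solution on the undirected cycle as well.

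I do not expect any real obstacle here: the statement is essentially a direct unpacking of Definition~\ref{def:loop} together with the already-proven equivalence between cycles generated by $\M_\Pi$ and feasible solutions of $\Pi$ on cycles. The mild care point is simply to make sure the degenerate short cycles (length one, length two) are handled by the same constant labeling, which follows because the loop transition $q \to q$ is precisely what the node constraint $\CV$ needs at every internal node, and $q \in \CE$ is precisely what the edge constraint needs at every edge.
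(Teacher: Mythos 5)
Your proposal is correct and matches the paper's own (one-line) proof: a loop state $q$ yields a returning walk $q \leadsto q$ of every length, so every cycle can be labeled with the constant sequence $(q,\dotsc,q)$. The extra remarks about degenerate short cycles are harmless but not needed beyond what the paper already assumes.
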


\begin{proof}
As $\M_\Pi$ has a loop, returning walks of all lengths exists and all cycles can be labeled.
\end{proof}

\begin{theorem} \label{thm:cycl_no_rep_0_sol}
Let $\Pi$ be an $\LCL$ problem on cycles. If $\M_\Pi$ has does not have a repeatable state the number of solvable instances is zero.
\end{theorem}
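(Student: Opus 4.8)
The plan is to argue by contraposition: I will show that if some cycle instance is solvable, then $\M_\Pi$ must contain a repeatable state. Suppose $G$ is a cycle of length $m$ that admits a legal labeling $\lambda$ of $\Pi$. By the correspondence established earlier (``Automata capture node-edge-checkable problems''), a feasible solution on a cycle of $m$ nodes is exactly the same thing as a cycle $(x_1, x_2, \dotsc, x_m)$ that $\M_\Pi$ can generate: reading the label pairs around $G$ in a fixed direction starting from an arbitrary edge yields states $x_1, \dotsc, x_m \in \CE$, with a state transition $x_i \to x_{i+1}$ for each $i < m$ and a transition $x_m \to x_1$.

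The key step is then the observation that the sequence of transitions $x_1 \to x_2 \to \dotsb \to x_m \to x_1$ constitutes a closed walk in $\M_\Pi$ that starts and ends at $x_1$; in particular it is a walk $x_1 \leadsto x_1$ of length $m \ge 1$. By Definition~\ref{def:repeatable-state}, this makes $x_1$ a repeatable state of $\M_\Pi$. Contrapositively, if $\M_\Pi$ has no repeatable state, then no cycle can be generated, so $\Pi$ has no solution on a cycle of any length $m$, i.e.\ the number of solvable cycle instances is zero.

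I do not anticipate any real obstacle here: the statement is essentially immediate from the generating-cycle characterization, and the only thing to be slightly careful about is the degenerate-looking case $m = 1$ (a one-node, one-edge ``cycle''), where the single state $x_1$ must satisfy $x_1 \to x_1$, which is still a walk $x_1 \leadsto x_1$ and hence still witnesses repeatability. So the argument goes through uniformly for all $m$.
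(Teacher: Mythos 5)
Your argument is correct and is exactly the paper's proof, just spelled out in full: the paper's one-line justification ("any legal labeling on cycles has to contain a repeatable state") is precisely your observation that a generating cycle $(x_1,\dotsc,x_m)$ yields a closed walk $x_1 \leadsto x_1$, witnessing repeatability. Nothing further is needed.
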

\begin{proof}
Any legal labeling on cycles has to contain a repeatable state.
\end{proof}

\begin{theorem} \label{thm:cycles_only_reps_inf_unsol}
Let $\Pi$ be an $\LCL$ problem. Assume $\M_\Pi$ does not have any flexible state. Then there are infinitely many unsolvable instances on cycles.
\end{theorem}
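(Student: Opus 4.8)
The plan is to split into two cases and then reduce to an elementary fact about arithmetic progressions. First I would dispose of the degenerate case: if $\M_\Pi$ has no repeatable state at all, then Theorem~\ref{thm:cycl_no_rep_0_sol} already tells us that no cycle is solvable, so every cycle is an unsolvable instance and there is nothing more to prove. So in the main argument I would assume that $\M_\Pi$ has at least one repeatable state.

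Next I would rephrase solvability of cycles in terms of the sets $L_q$ from Definition~\ref{def:Lq}. By the dictionary established after the construction of $\M_\Pi$, a cycle of length $n$ is solvable if and only if there is a state $q$ with a closed walk $q \leadsto q$ of length exactly $n$, that is, $n \in L_q$; any such $q$ is automatically repeatable. Since by hypothesis $\M_\Pi$ has no flexible state, Lemma~\ref{lem:gcd_flexible} gives $\gcd(L_q) = d_q \ge 2$ for every repeatable state $q$, and hence $L_q \subseteq d_q \mathbb{Z}$. Consequently the set of lengths of solvable cycles is contained in the finite union $\bigcup_q d_q \mathbb{Z}$ over the (finitely many) repeatable states $q$, with every modulus $d_q \ge 2$.

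The last step, which is really the only non-routine one, is to note that such a finite union of proper residue classes cannot be cofinite. I would set $D = \max_q d_q < \infty$ and observe that for any prime $p > D$ and any repeatable state $q$ we have $2 \le d_q \le D < p$, so $d_q \nmid p$; thus $p$ lies outside $\bigcup_q d_q\mathbb{Z}$ and the cycle of length $p$ is unsolvable. Since there are infinitely many primes exceeding $D$, we obtain infinitely many unsolvable cycles.

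I expect the only real obstacle to be recognizing that finitely many arithmetic progressions with moduli at least $2$ must omit infinitely many integers; the clean certificate is to take primes larger than all the moduli. Everything else — translating cyclic labelings into closed walks and invoking Lemma~\ref{lem:gcd_flexible} to get $\gcd(L_q)\ge 2$ — is routine bookkeeping.
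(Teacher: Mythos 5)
Your proof is correct and follows essentially the same route as the paper's: both reduce via Lemma~\ref{lem:gcd_flexible} to the fact that $\gcd(L_q) \ge 2$ for every (repeatable) state $q$, so the lengths of solvable cycles lie in a finite union of proper arithmetic progressions, which cannot be cofinite. The only cosmetic difference is the witness family — the paper exhibits the unsolvable lengths $kb+1$ with $b = \prod_{q} \gcd(L_q)$, whereas you take primes exceeding all the moduli; both yield infinitely many unsolvable cycles, and your separate treatment of the no-repeatable-state case is a harmless (indeed slightly tidier) addition.
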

\begin{proof}
Let $Q$ be the set of states of $\M_\Pi$. Since no state is flexible in $\M_\PP$, by Lemma~\ref{lem:gcd_flexible} we have $\gcd(L_q) > 1$ for all $q \in Q$. Pick \[b = \prod_{q\in Q} \gcd(L_q).\]
Now $kb + 1 \notin L_q$ for any $q \in Q$ and any natural number $k$. Therefore it is not possible to use any state $q$ in a cycle of length $kb+1$, as a feasible solution in such a cycle would form a walk $q \leadsto q$ of length $kb+1$. Hence there are infinitely many unsolvable instances.
\end{proof}

\begin{theorem} \label{thm:no_rep_impl_c_sol}
Let $\Pi$ be an $\LCL$ problem. Suppose $\M_\Pi$ does not have repeatable state. Then there are at most constantly many solvable instances.
\end{theorem}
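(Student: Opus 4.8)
The plan is to use the fact that ``$\M_\Pi$ has no repeatable state'' is the same as saying that the transition graph of $\M_\Pi$ (on the vertex set $\CE$) is acyclic, and then to read the bound off directly from that DAG structure.

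First I would record the acyclicity: if some state lay on a directed closed walk of positive length in $\M_\Pi$, that state would be repeatable by Definition~\ref{def:repeatable-state}; so under our hypothesis $\M_\Pi$ has no such closed walk, i.e.\ its transition graph is a DAG. The consequence I would extract next is that \emph{every} walk $x_1 \to x_2 \to \dots \to x_m$ in $\M_\Pi$ visits each state at most once: two occurrences of the same state would delimit a positive-length closed walk, contradicting acyclicity. Hence $m \le |\CE| \le |\Gamma|^2$, a quantity that depends only on the description of $\Pi$.

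Then I would translate this into solvability. By the correspondence established when we turned node-edge-checkable problems into automata (``automata capture node-edge-checkable problems''), $\Pi$ is solvable on the path with $m$ edges exactly when $\M_\Pi$ can generate a path $(x_1,\dots,x_m)$ --- a walk of the above type with $x_1$ a starting state and $x_m$ an accepting state. By the length bound, any such $m$ satisfies $m \le |\CE|$, so $\Pi$ is solvable on at most $|\CE|$ of the path instances. Combined with Theorem~\ref{thm:cycl_no_rep_0_sol}, which already gives that no cycle is a solvable instance, this yields a total of at most $|\CE|$ solvable instances --- a constant depending only on $\Pi$.

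I do not anticipate a real obstacle. The only points that want a sentence of care are: (i) making precise that ``a path of length $m$'' corresponds to a walk on $m$ vertices of the transition graph, so that the $|\CE|$ bound is a bound on the possible lengths; and (ii) noting that the argument does not rely on first pruning states that lie on no start-to-accept walk --- such pruning only removes states, so the automaton stays a DAG and the bound remains valid (and if one also wishes to count the degenerate single-node path, the constant changes by at most one).
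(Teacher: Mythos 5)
Your proposal is correct and follows essentially the same route as the paper's own proof: the absence of a repeatable state means no walk in $\M_\Pi$ can revisit a state, so every generated path has length at most $|\CE|$, leaving only constantly many solvable path instances (and no solvable cycles). You merely spell out the acyclicity argument that the paper's proof asserts implicitly, which is a fine elaboration rather than a different approach.
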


\begin{proof} 
Let $Q$ be a set of states of $\M_\Pi$. As $\M_\Pi$ does not have a repeatable state, all walks that can form legal labeling have to have to have length at most $|Q|$. So all paths of lengths $n >|Q|$ are unsolvable instances.
\end{proof}

\section{Extension to rooted trees}\label{sec:trees}

In the previous sections, we have presented a complete classification of all $\LCL$ problems on paths and cycles. Now we will demonstrate how to leverage these results (in particular, the classification of $\LCL$s on directed paths) to also classify a family of $\LCL$ problems on \emph{rooted trees}.

There is one obstacle to keep in mind: if the $\LCL$ problem can refer to e.g.\ the degrees of the nodes, then we can use the structure of the tree to encode input labels; and then the setting is at least as general as $\LCL$ problems on labeled paths, which implies that questions on locality are at least $\PSPACE$-hard \cite{Balliu2019decidable}. Hence to have efficient classification algorithms for rooted trees we need to choose a restricted family of $\LCL$ problems. We will here use \emph{edge-checkable} problems as an example---as we will see, it is a broad enough family of problems to capture many interesting problems but restricted enough that we can still classify all such problems.

\paragraph{Edge-checkable \tLCL{} problems.}

An \emph{edge-checkable} $\LCL$ problem $\Pi$ on rooted trees consists of a finite set $\Gamma$ of output labels and a set of constraints  $\CE \subseteq \Gamma \times \Gamma$ specifying the set of allowed ordered pairs of labels $(\lambda(u), \lambda(v))$ on the two endpoints $u$ and $v$ of each  edge, where $u$ is the parent of $v$. For example, the vertex $k$-coloring problem is a (symmetric) edge-checkable $\LCL$ problem with $\Gamma=\{1,2, \ldots, k\}$ and $\CE =  \{ (a,b) \in  \Gamma \times \Gamma \ | \  a \neq b \}$.

Any edge-checkable $\LCL$ problem $\Pi$ can be alternatively described in our formalism: 
$\Pi = (\Gamma,\CE,\CV,\CH,\CT)$, where
$\CV = \{ (a,a) \ | \ a \in \Gamma \}$ and
$\CH = \CT = \Gamma$. Hence $\Pi$ can also be seen as an  $\LCL$ problem on directed paths---in essence, $\Pi$ describes what are feasible label sequences when one follows any path from a leaf to the root. We claim that $\Pi$  has the same asymptotic round complexity on both rooted trees and directed paths, and hence our classification of $\LCL$ problems on directed paths also applies to edge-checkable $\LCL$ problems on rooted trees:

\begin{theorem}\label{thm:trees}
Let  $\Pi$ be any edge-checkable $\LCL$ problem on rooted trees. Then  $\Pi$ has the same asymptotic round complexity on both rooted trees and directed paths.
\end{theorem}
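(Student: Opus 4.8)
The plan is to establish the two directions of the asymptotic equivalence separately: that the round complexity on rooted trees is at least that on directed paths, and that it is at most that on directed paths. The first direction is a simulation/embedding argument, the second is the substantive one and will use the classification from Section~\ref{sec:paths}.

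For the lower-bound direction (paths are ``hard'' instances inside trees), I would observe that a directed path is a special case of a rooted tree: orient each edge of the path toward one endpoint, designate that endpoint as the root, and the edge-checkable constraint $\CE$ is exactly the parent-child constraint. Hence any algorithm solving $\Pi$ on rooted trees in $T(n)$ rounds solves $\Pi$ on directed paths in $T(n)$ rounds, so the round complexity on paths is at most that on trees. (One should be slightly careful that "directed path'' in our formalism carries start/end constraints $\CH=\CT=\Gamma$, which are vacuous, so a leaf-to-root path in a tree really is an unconstrained instance; this matches the reduction $\Pi = (\Gamma,\CE,\CV,\CH,\CT)$ with $\CV=\{(a,a)\}$, $\CH=\CT=\Gamma$ described just before the theorem.)

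For the upper-bound direction I would argue case by case according to the complexity of $\Pi$ on directed paths, using the classification (Table~\ref{tab:cycles}, restricted to directed paths: the possibilities are $O(1)$, $\Theta(\log^* n)$, $\Theta(n)$, and the degenerate type-J/K case). If $\Pi$ is $\Theta(n)$ on directed paths then trivially it is $O(n)$ on trees as well (every node can see the whole tree in $O(n)$ rounds and brute-force a global solution, provided one exists — and existence on a tree reduces to existence on each root-to-leaf path, which is controlled by the same automaton $\M_\Pi$), giving $\Theta(n)$. The interesting cases are $O(1)$ and $\Theta(\log^* n)$. Here I would lift the path algorithms: by the discussion in Section~\ref{sec:paths}, $O(1)$-solvability on directed paths corresponds to the existence of a loop state $q$ in $\M_\Pi$ reachable from a starting state and co-reachable to an accepting state, and $\Theta(\log^* n)$-solvability corresponds to the existence of a flexible state. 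On a rooted tree, apply the analogous strategy along every leaf-to-root path simultaneously: solve distance-$k$ anchoring in the tree (which can still be done in $O(\log^* n)$ rounds, e.g.\ via MIS in the $k$-th power of the line graph), put state $q$ at the anchor edges, and fill each anchor-to-anchor (or leaf-to-anchor, or anchor-to-root) segment using flexibility of $q$. The key point making this work is that $\CV=\{(a,a)\}$: the constraint at a node only couples an edge with its parent edge, not siblings with each other, so the fill-in can be done independently along each root-to-leaf path with no conflict at branching nodes — each child edge of a node $v$ just has to match the label that $v$'s parent edge forces, which is a single label, and flexibility lets us reach it. For the $O(1)$ case the loop state needs no anchoring at all: label all edges $q$ in the interior and patch the top $O(1)$ levels near the root and the bottom $O(1)$ levels near the leaves using the fixed walks $s \leadsto q$ and $q \leadsto t$.

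The main obstacle I anticipate is the branching structure of the tree in the $\log^* n$ case: one must check that the anchoring and the fill-in can be made globally consistent despite each node having many children. This is exactly where edge-checkability (equivalently $\CV = \{(a,a)\}$) is essential — it decouples the subtrees hanging off a node, so consistency is a purely ``one-sided'' (parent-to-child) propagation, matching the remark in Section~\ref{ssec:other-models} that our tree algorithms are one-sided. A secondary subtlety is handling the root and the leaves: the root plays the role of both endpoints' worth of boundary and the leaves all need $\CH=\Gamma$ (vacuous), so no genuine boundary obstruction arises, but this should be stated. Finally, one should note that solvability itself is preserved: $\Pi$ has a solution on a given rooted tree iff $\M_\Pi$ accepts $\unary^{d}$ for $d$ equal to every leaf-to-root distance, and the $\Theta(n)$ brute-force, the loop, or the flexible state respectively witness a solution whenever one exists — so the "same round complexity'' statement is not vacuously comparing two infinities.
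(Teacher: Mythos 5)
Your proof is correct in outline but takes a genuinely different route from the paper's. The paper's upper-bound direction is a \emph{black-box simulation}: given any $T(n)$-round algorithm $\mathcal{A}$ for $\Pi$ on directed paths, first make it \emph{one-sided} (the output of $v$ depends only on $v$ and its predecessors) by shifting all outputs down by $T$ nodes and handling the first $T$ nodes locally, then run this one-sided algorithm on every root-to-leaf path of the tree simultaneously; one-sidedness guarantees the executions agree wherever paths overlap, so no case analysis and no appeal to the classification is needed. You instead re-derive the canonical algorithms case by case (loop state for $O(1)$, flexible state plus distance-$k$ anchoring for $\Theta(\log^* n)$, brute force for $\Theta(n)$) directly on the tree, using $\CV=\{(a,a)\}$ to decouple sibling subtrees. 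Your route is more constructive and makes explicit \emph{why} the tree case is no harder, but it carries extra obligations that you gloss over: at a branching node $v$ lying between an anchor above and several anchors below at possibly different prescribed depths, you need walks of \emph{exact} prescribed lengths from the (shared) state at $v$ to $q$ in each branch --- this is the D3-directing-word property, not bare flexibility of $q$, though the two are equivalent within a strongly connected component by the paper's Lemma~\ref{lem:gcd_flexible}-adjacent equivalence (the three-way equivalence lemma in Section~\ref{sec:cycles}); and your MIS-based anchoring in the tree does not immediately give the uniform-depth condition of Definition~\ref{def:anchoring-tree}, since an edge's nearest anchor in the $k$-th power of the line graph may sit in a sibling subtree. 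The paper sidesteps both issues by making the anchoring itself a one-sided path computation. Your lower-bound direction and the observation that solvability on a tree reduces to acceptance of $\unary^d$ for every leaf-to-root distance $d$ match the paper.
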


\begin{proof}
A directed path is a special case of a rooted tree, and hence lower bounds on directed paths automatically apply to rooted trees. The non-trivial part of the proof is to transform any given algorithm $\mathcal{A}$ for  $\Pi$ on directed paths to an algorithm $\mathcal{A}'$ for  $\Pi$ on rooted trees with the same asymptotic round complexity.

Let the node with only one tail port be the first node in a directed path.
We can assume that $\mathcal{A}$ is \emph{one-sided} in the sense that the output label of $v$ only depends on $v$ and the nodes that precedes $v$ in the directed path. To achieve that, we just need to shift the output labels by $T$ nodes, where $T$ is the runtime of  $\mathcal{A}$, and then assign the output labels of the first $T$ nodes of the directed path locally.

Given that $\mathcal{A}$ is one-sided,
the algorithm $\mathcal{A}'$ for  $\Pi$ on rooted trees applies $\mathcal{A}$ to each root-to-leaf path simultaneously. Because  $\mathcal{A}$ is one-sided, the output label of each node $v$ only depends on $v$ and its ancestors, and hence such a simultaneous execution is possible. The correctness of $\mathcal{A}'$ follows from the correctness of $\mathcal{A}$.
\end{proof}

\paragraph{Canonical algorithms for rooted trees.}

Recall that any $O(\log^\ast n)$-round $\LCL$ problem $\Pi$ on directed paths can be solved in a canonical way as follows. Let   $q$ be any flexible state with flexibility $k$. We find a distance-$k$ anchoring for the directed path, use $q$ at the anchor points, and fill the output labels in the gaps. 

The  proof of Theorem~\ref{thm:trees} implicitly implies that any $O(\log^\ast n)$-round edge-checkable $\LCL$ problem $\Pi$ on rooted trees can be solved in a canonical way analogously based on the following variant of distance-$k$ anchoring.

\begin{definition}\label{def:anchoring-tree}
A distance-$k$ anchoring for a rooted tree $T$  is a maximal subset of edges that splits the rooted tree into subtrees $T_1, T_2, \ldots, T_s$ satisfying the following conditions.
\begin{itemize}[noitemsep]
    \item The height of $T_i$ is at least $k-1$.
    \item If $v$ is a leaf in $T_i$ and a non-leaf in $T$, then the distance between the root of $T_i$ and $v$ equals the height of $T_i$.
\end{itemize}
\end{definition}

Such a distance-$k$ anchoring for a rooted tree $T$ can be computed in $O(\log^\ast n)$ rounds for $k = O(1)$ as follows. Apply any one-sided $O(\log^\ast n)$-round algorithm for computing a distance-$k$ anchoring for directed paths on each root-to-leaf path of $T$, shift down the output labels by $O(k)$ nodes, and finally split the large (possibly unbalanced) subtree near the root as appropriate.

\section{Discussion}\label{sec:conclusion}

We have seen that questions about the solvability of $\LCL$s in paths are closely related to classical automata-theoretic questions, as we can directly interpret a path as a string. Our work on $\LCL$s in cycles can be then seen as an extension of classical questions to \emph{cyclic words}. In particular, we see that an automaton ``accepts'' all but finitely many cyclic words if and only if there is a flexible state in the automaton, or equivalently if a D3-directing word exists for a strongly connected component of the automaton. Our work shows that all such questions on cyclic words can be decided in polynomial time, even if their classical non-cyclic analogs are in some cases $\coNP$-complete.

As we saw in Section~\ref{sec:trees}, our approach can be extended to the study of $\LCL$s beyond unlabeled paths and cycles. There are two main open questions after our work:
\begin{enumerate}
    \item What is the largest family of $\LCL$ problems in rooted trees for which round complexity can be decided in polynomial time? We now know that edge-checkable $\LCL$ problems can be characterized efficiently, while the general case is $\PSPACE$-hard.
    \item Is the round complexity of all $\LCL$ problems on rooted trees decidable? What about unrooted trees? For $\LCL$ problems on bounded-degree trees, we  know that it is decidable to distinguish between the complexity pairs $O(\log n)$ -- $n^{\Omega(1)}$  and $O(n^{1/(k+1)})$ -- $\Omega(n^{1/k})$ for any constant $k \geq 1$~\cite{Balliu2018disc,Chang2019,chang:LIPIcs:2020:13096}, but the general question for deciding the round complexity of $\LCL$ problems on trees is still widely open.
\end{enumerate}

\paragraph{Recent follow-up work.}
Subsequent to this work, the round complexity of $\LCL$ problems in the case of rooted \emph{regular} trees was studied in \cite{balliu21rooted-trees}, but the above questions still remain largely open. Polynomial-time decidability in rooted regular trees is still an open question. Rooted trees in general are not yet understood, and neither are unrooted regular trees.

\section*{Acknowledgments}

This work is an extended and revised version of a preliminary conference report that appeared in the 28th International Colloquium on Structural Information and Communication Complexity (SIROCCO 2021).

We would like to thank Alkida Balliu, Sebastian Brandt, Laurent Feuilloley, Juho Hirvonen, Yannic Maus, Dennis Olivetti, Aleksandr Tereshchenko, Jara Uitto, and all participants of the Helsinki February Workshop 2018 on Theory of Distributed Computing for discussions related to the decidability of $\LCL$s on trees. We would also like to thank the anonymous reviewers of previous versions of this works for their helpful comments and feedback.

Yi-Jun Chang was supported by Dr.~Max R\"{o}ssler, by the Walter Haefner Foundation, and by the ETH Z\"{u}rich Foundation.

{\DeclareUrlCommand\path{}
\def\UrlFont{\footnotesize\sf}
\bibliography{references}}

\end{document}